\journal{Elsevier journal}
\def\P{\mathbb{P}}
\def\R{\mathbb{R}}
\def\E{\mathbb{E}}
\def\y{{\boldsymbol{y}}}
\def\x{{\boldsymbol{x}}}
\def\b{{\boldsymbol{b}}}
\def\h{{\boldsymbol{h}}}
\def\e{{\boldsymbol{e}}}
\def\d{{\boldsymbol{d}}}
\def\c{{\boldsymbol{c}}}
\def\w{{\boldsymbol{w}}}
\def\owv{\overline{\boldsymbol{w}}}
\def\0{{\bf{0}}}
\def\1{{\bf{1}}}
\def\G{{\bf{G}}}
\def\I{{\bf{I}}}
\def\A{{\bf{A}}}
\def\eAv{\widehat{\bf{A}}}
\def\vbeta{{\boldsymbol{\beta}}}
\def\valpha{{\boldsymbol{\alpha}}}
\def\evmu{\widehat{\boldsymbol{\mu}}}
\def\ebv{\widehat{\boldsymbol{b}}}
\def\ecv{\widehat{\boldsymbol{c}}}
\def\tcv{\widetilde{\boldsymbol{c}}}
\def\tbv{\widetilde{\boldsymbol{b}}}
\def\evalpha{{\widehat{\boldsymbol{\alpha}}}}
\def\eR{\widehat{R}}
\def\eb{\widehat{b}}
\def\ec{\widehat{c}}
\def\tc{\widetilde{c}}
\def\oc{\overline{c}}
\def\tb{\widetilde{b}}
\def\sign{{\rm sign}}
\def\etheta{\widehat{\theta}}
\def\ttheta{\widetilde{\theta}}
\def\obeta{\overline{\beta}}
\def\oK{\overline{K}}
\def\eK{\widehat{K}}
\def\oE{\overline{E}}
\def\oF{\overline{F}}
\def\oG{\overline{G}}
\def\eR{\widehat{R}}
\def\esigma{\widehat{\sigma}}
\def\median{{\rm median}}
\def\det{{\rm det}}
\def\ealpha{\widehat{\alpha}}
\def\RAS{R_{\rm AS}}
\newtheorem{theorem}{Theorem}
\newtheorem{lemma}{Lemma}
\begin{document}

\begin{frontmatter}



\title{Adaptive scaling for soft-thresholding estimator}


\author{Katsuyuki Hagiwara}

\ead{hagi@edu.mie-u.ac.jp}
\address{Faculty of Education, Mie University,\\ 1577 Kurima-Machiya-cho, Tsu, 514-8507, Japan}

\begin{abstract}
Soft-thresholding is a sparse modeling method that is typically applied
to wavelet denoising in statistical signal processing and analysis. It
has a single parameter that controls a threshold level on wavelet
coefficients and, simultaneously, amount of shrinkage for coefficients
of un-removed components. This parametrization is possible to cause
excess shrinkage, thus, estimation bias at a sparse representation;
i.e. there is a dilemma between sparsity and prediction accuracy. 
To relax this problem, we considered to introduce positive scaling on
soft-thresholding estimator, by which threshold level and amount of
shrinkage are independently controlled. Especially, in this paper, we
proposed component-wise and data-dependent scaling in a setting of
non-parametric orthogonal regression problem including discrete wavelet
transform. We call our scaling method adaptive scaling. We here employed
soft-thresholding method based on LARS(least angle regression), by which
the model selection problem reduces to the determination of the number
of un-removed components. We derived a risk under LARS-based
soft-thresholding with the proposed adaptive scaling and established a
model selection criterion as an unbiased estimate of the risk. We also
analyzed some properties of the risk curve and found that the model
selection criterion is possible to select a model with low risk and high
sparsity compared to a naive soft-thresholding method. This theoretical
speculation was verified by a simple numerical experiment and an
application to wavelet denoising.
\end{abstract}

\begin{keyword}



non-parametric orthogonal regression, soft-thresholding, shrinkage,
adaptive scaling, wavelet denoising
\end{keyword}

\end{frontmatter}



\section{Introduction}

Orthogonal transform such as discrete wavelet transform is an important
tool in statistical signal processing and analysis. Especially, wavelet
denoising is a popular application of discrete wavelet transform. In
wavelet denoising, noisy signal is transformed into wavelet domain in
which wavelet coefficients are obtained. By applying a thresholding
method, noise-related parts of coefficients are removed in a sense;
e.g. some of coefficients are set to zero. The inverse wavelet transform
of the modified coefficients yields a denoised signal.  The most popular
and simple methods of thresholding is hard and soft-thresholding in
\citep{DJ1994,DJ1995}. Both thresholding methods have a parameter.  In
hard-thresholding method, the parameter works purely as a threshold
level; i.e. coefficients less than the parameter value are removed and
un-removed coefficients are harmless.  On the other hand, in
soft-thresholding, the parameter works as a threshold level as in
hard-thresholding and simultaneously as an amount of shrinkage for
un-removed components. Coefficients less than the parameter value are
removed and un-removed coefficients are shrunk toward zero by the
parameter.  For a better denoising performance, we need to determine an
optimal value of the parameter. For example, in hard-thresholding, if
the parameter value is too large then most of coefficients are removed
even when those are significant. This results in an excess smoothing
that yields a large bias between estimated output and target function
output. On the other hand, if the parameter value is too small then most
of coefficients are un-removed even when those are not significant. This
results in a large variance of output estimate and, thus useless for
denoising.  A problem of choice of an optimal parameter value is often
referred as a model selection problem. There are several model selection
methods under thresholding. \citep{DJ1994} has proposed universal hard
and soft-thresholding in which a theoretically significant constant
value is employed as a parameter value. Also, \citep{DJ1994} has derived
a criterion for determining an optimal parameter value of
soft-thresholding by applying Stein's lemma\citep{CS1981}. The
soft-thresholding method with this criterion is called as SURE (Stein's
Unbiased Risk Estimator) shrink in \citep{DJ1994}. Unfortunately, there
is no such a theoretically supported criterion for hard-thresholding
while modified cross validation approaches have been proposed
\citep{NGP1996,HT1998}.

We focus on a soft-thresholding method in this paper. As previously
mentioned, soft-thresholding is a combination of hard-thresholding and
shrinkage in which both of threshold level and amount of shrinkage are
simultaneously controlled by a single parameter. The parameter is a
threshold level for removing un-necessary components and is also an
amount of shift by which estimators of coefficients of un-removed
components are shrunk toward to zero. If the parameter value is large
then threshold level is large. Therefore, the number of un-removed
components is small. However, at the same time, the amount of shrinkage
is automatically large. This can be an excess shrinkage amount which may
yields a large bias of output estimate in representing a target
function.  This may cause a high prediction error at a relatively small
model even when it can represent a target function; i.e. even when it
can obtain a sparse representation.  Therefore, the number of un-removed
components in soft-thresholding tends to be large if we choose the
parameter value based on a substitution of prediction error such as SURE
and cross-validation error.  This is an inevitable problem of
soft-thresholding, which is brought about by an introduction of a single
parameter for controlling both of threshold level and amount of
shrinkage simultaneously. Note that, in the implementation of
thresholding methods for wavelet denoising in \cite{DJ1994},
thresholding is recommended to apply only to detail coefficients. This
heuristics may be actually valid to avoid the problem mentioned here.

On the other hand, in machine learning and statistics, there are several
model selection methods by using regularization, in which coefficient
estimators are obtained by minimizing a regularized cost that consists
of error term plus regularization term. A regularization method has a
parameter that is multiplied by regularizer in the regularization term
and determines a balance between error and regularization.  LASSO (Least
Absolute Shrinkage and Selection Operator) is a very popular
regularization method for variable selection\cite{LASSO}. It employs sum
of absolute values of coefficients as a regularizer; i.e. $\ell_1$ norm
of a coefficient vector. LASSO is known to be useful for obtaining a
sparse representation of a target function; i.e. the number of
components for representing a target function is very small. In LASSO,
extra components are automatically removed by setting their coefficients
to zero. This property is clearly understood when it applied to
orthogonal regression problems. In this case, LASSO reduces to a
soft-thresholding method in which a parameter of soft-thresholding is a
regularization parameter divided by 2. Hence, a sparseness obtained by
LASSO comes from a sof-thresholding property. And, thus, LASSO
encounters the above mentioned problem of soft-thresholding.  This
dilemma between sparsity and prediction of LASSO has already been
discussed in \citep{FL2001} and \citep{HZ2006}.  \citep{FL2001} has
proposed SCAD (Smoothly Clipped Absolute Deviation) penalty which is a
nonlinear modification of $\ell_1$ penalty.  \citep{HZ2006} has proposed
adaptive LASSO that employs weighted $\ell_1$ penalties.  An $\ell_1$
penalty term is modified by different ways (functions) in SCAD and
adaptive LASSO while shrinkage is suppressed for large values
of estimators in both methods. This may reduce an excess shrinkage at a
relatively small model.  Especially, in case of orthogonal regression,
weights of adaptive LASSO are effective for directly and adaptively
reducing a shrinkage amount that is represented as a shift in
soft-thresholding. In these methods, cross validation is used as a model
selection method for choosing parameter values such as a regularization
parameter. Unfortunately, usual cross validation can not be used in
orthogonal regression unless it is heuristically modified as in
\citep{NGP1996,HT1998}.

In this paper, we introduce a scaling of soft-thresholding estimators;
i.e. a soft-thresholding estimator is multiplied by a scaling
parameter. Unlike adaptive LASSO, introduction of scaling is intended to
control threshold level and amount of shrinkage independently.  It is
thus a direct solution for a problem of parametrization of
soft-thresholding. If the scaling parameter value is less than one then
it works as shrinkage of soft-thresholding estimator.  For an orthogonal
regression problem, this is equivalent to elastic net\cite{ELNET} in
machine learning. However, the scaling parameter can be larger than one
by which the above mentioned excess shrinkage in soft-thresholding is
expected to be relaxed; i.e. scaling expands a shrinkage estimator
obtained by soft-thresholding.  Especially in this paper, we propose a
component-wise and data-dependent scaling method; i.e. scaling parameter
value can be different for each coefficient and is calculated from data.
We refer the proposed scaling as adaptive scaling.  In this paper, we
derive a risk under adaptive scaling and construct a model selection
criterion as an unbiased risk estimate. Therefore, our work establishes
a denoising method in which a drawback of a naive soft-thresholding is
improved by the introduction of adaptive scaling and an optimal model is
automatically selected according to a derived criterion under the adaptive
scaling.

In Section 2, we state a setting of orthogonal non-parametric regression
that includes a problem of wavelet denoising. In this section, we also
give a naive soft-thresholding method and several related methods. In
this paper, especially, we employ a soft-thresholding method based on
LARS (least angle regression)\cite{LARS} in these methods.  In
LARS-based soft-thresholding, a model selection problem reduces to the
determination of the number of un-removed components. In Section 3, we
define an adaptive scaling and derive a risk under LARS-based
soft-thresholding with the adaptive scaling. We then give a model
selection criterion as an unbiased estimate of the risk. We here also
consider the properties of risk curve and reveals the model selection
property. The proofs of theorems in this section are included in
Appendix with some lemmas. In Section 4, the proposed adaptive scaling
method is examined for toy artificial problems including applications to
wavelet denoising. Section 5 is devoted to conclusions and future works.

\section{Non-parametric orthogonal regression}

\subsection{Setting and assumption of orthogonal non-parametric regression}

Let $\x=(x_1,\ldots,x_m)$ and $y$ be input variables and an output
variable, for which we have $n$ i.i.d. samples :
$\{(\x_i,y_i):i=1,\ldots,n\}$, where $\x_i=(x_{i,1},\ldots,x_{i,m})$.
We assume that $y_i=h(\x_i)+e_i$, $i=1,\ldots,n$, where $e_1,\ldots,e_n$
are i.i.d additive noise sequence according to $N(0,\sigma^2)$;
i.e. normal distribution with mean $0$ and variance $\sigma^2$. $h$ is a target
function. We assume that $\x_1,\ldots,\x_n$ are fixed below. We define
$\y=(y_1,\ldots,y_n)'$, $\h=(h(\x_1),\ldots,h(\x_n))'$ and
$\e=(e_1,\ldots,e_n)'$, where $'$ denotes a matrix transpose.  We then
have $\y=\h+\e$ and $\E[\y]=\h$, where $\E$ denotes the
expectation with respect to the joint probability distribution of $\y$.

Let $g_1,g_2,\ldots$ be a series of functions on $\R^m$. We consider to
estimate a target function by a linear combination of $n$ 
functions in this series :
\begin{equation}
f_{\b}(\x)=\sum_{j=1}^nb_jg_j(\x),~\x\in\R^m,
\end{equation}
where $\b=(b_1,\ldots,b_n)'$ is a coefficient vector. This is a
non-parametric regression problem.  We call $g_j$ a component or basis
function.  We assume that there exists $n^*$ and
$\vbeta=(\beta_1,\ldots,\beta_n)'$ such that
$h(\x)=\sum_{j=1}^{n}\beta_jg_j(\x)$ for any $\x\in\R^m$ when $n\ge
n^*$. $\beta_j$ can be zero for some $j$. 
We define $K^*=\{j:1\le j\le
n,~\beta_j\neq 0\}$ and denote the complement of $K^*$ by $\oK^*$.  We
call $g_j$ with $j\in K^*$ true component or non-zero component.  We
also define $k^*=|K^*|$ which is the number of true components or
non-zero components. 
We assume that $n\ge n^*$; i.e. true components are always included in a model.
We also assume that $k^*$ is very small compared to $n$. These two assumptions
say that there exists a sparse representation of a target function in
terms of a set of $n$ components.

Let $\G$ be an $n\times n$ matrix whose $(i,j)$ element is
$g_j(\x_i)$. We assume that the orthogonality condition :
\begin{equation}
\label{eq:ot-cond}
\G'\G=n\I_n,  
\end{equation}
where $\I_n$ denotes an $n\times n$ identity matrix. We thus consider a
non-parametric orthogonal regression problem; e.g. discrete Fourier
transform and discrete wavelet transform for typical examples. The
least squares estimator under the orthogonality condition is given by
\begin{equation}
\label{eq:ecv}
\ecv=(\ec_1,\ldots,\ec_n)'=\frac{1}{n}\G'\y.
\end{equation}
Note that we have $\y=\G\ecv$ here. 
Since there exists a $\vbeta$ such that $\h=\G\vbeta$,
\begin{equation}
\label{eq:dist-of-ecv}
\ecv\sim N\left(\vbeta,\frac{\sigma^2}{n}\I_n\right)
\end{equation}
holds by the assumption on additive noise; i.e. multivariate normal
distribution with a mean vector $\vbeta$ and a unit covariance matrix
multiplied by $\sigma^2/n$. In other words, $\ec_j\sim
N(\beta_j,\sigma^2/n)$, $j=1,\ldots,n$ and $\ec_1,\ldots,\ec_n$ are
independent. We define $s_j=\sign(\ec_j)$, $j=1,\ldots,n$, where $\sign$
is a sign function. We define $p_1,\ldots,p_n$ as an index sequence for
which $|\ec_{p_1}|\ge\cdots\ge|\ec_{p_n}|$ holds. Note that we can
exclude the case of ties in our probabilistic evaluations in this paper
since this is guaranteed with probability one by (\ref{eq:dist-of-ecv}).

\subsection{LASSO, LARS, elastic net and adaptive LASSO}

Let $\ebv_{\theta}=(\eb_{\theta,1},\ldots,\eb_{\theta,n})$ with a
parameter $\theta\ge 0$ be a soft-thresholding estimator, in which
\begin{equation}
\label{eq:st-estimator}
\eb_{\theta,j}=(\ec_j-\theta)_{+}s_j,~j=1,\ldots,n
\end{equation}
where $(u)_+=\max(u,0)$. $\theta$ determines both of a threshold level
and amount of shrinkage. Under the orthogonality condition, several
sparse modeling methods can be reduced to soft-thresholding estimator.

For a fixed $\lambda_1\ge 0$, cost function of LASSO is given by
\begin{equation}
\label{eq:cost-lasso}
S_{\lambda_1}(\b)=\frac{1}{n}\|\y-\G\b\|^2+\lambda_1\|\b\|_1,
\end{equation}
where $\|\cdot\|$ is the Euclidean norm and
$\|\b\|_1=\sum_{k=1}^n|b_j|$; i.e. LASSO introduces an $\ell_1$
regularizer. $\lambda_1$ is a regularization parameter.  A minimizer of
(\ref{eq:cost-lasso}) under the orthogonality condition is known to be a
soft-thresholding estimator with $\theta=\lambda_1/2$; i.e. it is
$\ebv_{\lambda_1/2}$.  On the other hand, for fixed $\lambda_1\ge 0$ and
$\lambda_2\ge 0$, cost function of elastic net is given by
\begin{equation}
\label{eq:cost-elnet}
S_{\lambda_1,\lambda_2}(\b)=\frac{1}{n}\|\y-\G\b\|^2+\lambda_1\|\b\|_1+\lambda_2\|\b\|^2.
\end{equation}
Thus, elastic net introduces both of an $\ell_1$ regularizer and 
$\ell_2$ regularizer.  As shown in \citep{ELNET}, a minimizer of
(\ref{eq:cost-elnet}) under the orthonormality condition is given by
$\eb_{\lambda_1/2,k}/(1+\lambda_2)$, $k=1,\ldots,n$.  Since
$\lambda_2\ge 0$, the solution of elastic net is obtained by shrinking
LASSO estimator which is a soft-thresholding estimator.

On the other hand, LARS (Least angle regression) \cite{LARS} is a greedy
iterative algorithm in which a component is appended to a model at each
step. This can be viewed as a sparse modeling method if we can find an
optimal step.  For this purpose, a $C_p$ type criterion is derived under
a mild condition in \citep{LARS}.  As shown in \citep{KH2014} and Lemma
1 in \citep{LARS}, under the orthonormality condition, LARS is also
reduced to soft-thresholding estimator in which the parameter value is
given by $\theta=|\ec_{p_{k+1}}|$ at the $k$th step; i.e. it is the
$(k+1)$th largest absolute value among the least squares estimators.
Therefore, a set of candidates of parameter values is
$\{|\ec_{p_1}|,\ldots,|\ec_{p_n}|\}$ in LARS.  By this choice of
threshold level, the number of un-removed components at the $k$th step
is equal to $k$.  Therefore, a model selection problem of LARS-based
soft-thresholding is the
determination of the number of un-removed components. We refer to
LARS-based soft-thresholding as LST.

As a modification of LASSO, adaptive LASSO\citep{HZ2006} introduces a
weighted $\ell_1$ regularizer, in which a weight for the $j$th component
is $w_j$ and a choice of $w_j=1/|\ec_j|^{\gamma}$ with $\gamma>0$ is especially
considered in \citep{HZ2006}.  The solution of adaptive LASSO under the
orthonormality condition is given by
\begin{equation}
\label{eq:o-alasso-est}
\eb_{w_j,\lambda_1,j}=(|\ec_j|-w_j\lambda_1/2)_+s_j,~j=1,\ldots,n.
\end{equation}
It is regarded as a soft-thresholding estimator with a component-wise
and data-dependent parameter. If $|\ec_j|$ is large then $w_j$ is
small. In this case, threshold level and amount of shrinkage for the
corresponding estimator is small. This reduces a bias, or equivalently,
an excess shrinkage of estimator especially when the estimator is
actually valid; i.e. the corresponding component is needed. In other
words, adaptive LASSO avoids an excess shrinkage on estimators of
un-removed components by an adaptive manner; i.e. by controlling a
component-wise and estimator-dependent ``shift'' in soft-thresholding
estimator. This relaxes the problem of employing a single parameter value
for both of threshold level and amount of shrinkage in soft-thresholding. We can
choose a small parameter value for valid components and a large value for
non-essential components; i.e. the parameters
mainly work as threshold levels for removing non-essential components.

In this paper, by introducing scaling for soft-thresholding estimator,
we consider to control threshold level and amount of shrinkage
independently. Our approach is different from adaptive LASSO while they
serves the same purpose. As seen in later sections, the advantage of
employing scaling is that we can construct a model selection criterion
that is required in applications.

\section{Adaptive scaling}

\subsection{Component-wise scaling and some special cases}

Let $\ebv_k=(\eb_{k,1},\ldots,\eb_{k,n})$ be a vector of the above mentioned
LST estimators that are defined by
\begin{equation}
\label{eq:def-eb-k-j}
\eb_{k,j}=(|\ec_j|-\etheta_k)_{+}s_j,~j=1,\ldots,n,
\end{equation}
where $\etheta_k=|\ec_{p_{k+1}}|$.  We define
$\valpha=(\alpha_1,\ldots,\alpha_n)$ for $0<\alpha_j<\infty$.  In this
paper, we consider to employ
$\ebv_{k,\valpha}=(\eb_{k,1,\alpha_1},\ldots,\eb_{k,n,\alpha_n})$ in
which 
\begin{equation}
\label{eq:def-eb-a-k-j}
\eb_{k,j,\alpha_j}=\alpha_j\eb_{k,j,\alpha_j},~j=1,\ldots,n.
\end{equation}
We call $\alpha_j$, $j=1,\ldots,n$ component-wise scaling parameters.
Let $\A$ be an $n\times n$ diagonal matrix whose $(j,j)$ element is
$\alpha_j$. We can write $\ebv_{k,\valpha}=\A\ebv_k$.  We define
$\evmu_{k,\valpha}=\G\ebv_{k,\valpha}=\G\A\ebv_k$.  Note that, in a
matrix formulation, $\ebv_{k,\valpha}$ and $\ebv_k$ are used as vertical
vectors. As in the previous discussion, if we restrict
$\alpha_j=\alpha\le 1$ then the method is elastic net which yields
shrinkage of soft-thresholding estimator. Therefore, introduction of
scaling parameter can be viewed as an extension of elastic net. However,
we expect that scaling is used for expanding soft-thresholding
estimator; i.e. $\alpha_j>1$ is desirable. Note that $\evmu_{k,\valpha}$
is a two stage estimate in which LST is firstly applied and then scaling
is applied. Scaling re-adjusts only amount of shrinkage. A risk for LST
with component-wise scaling is defined by
\begin{equation}
\label{eq:risk-for-scaling-ST-def}
R_{n,k}(\valpha)=\frac{1}{n}\E\|\evmu_{k,\valpha}-\h\|^2=\E\|\A\ebv_k-\vbeta\|^2.
\end{equation}
where the latter definition is due to the orthogonality condition (\ref{eq:ot-cond}). A
naive LST is a case of $\valpha=\1_n$, where
$\1_n$ is an $n$-dimensional vector of one's. For this case, we have
\begin{equation}
\label{eq:risk-for-LST}
R_{n,k}(\1_n)=\frac{1}{n}\E\|\evmu_{k,\1_n}-\y\|^2-\sigma^2+\frac{2\sigma^2}{n}
\end{equation}
as a special case of \cite{LARS}.
More generally, in case of introducing a single common scaling parameter $\alpha$ on all
components, \citep{KH2015} has shown that
\begin{equation}
\label{eq:risk-for-LST-SSP}
R_{n,k}(\alpha\1_n)
=\frac{1}{n}\E\|\evmu_{k,\alpha\1_n}-\y\|^2-\sigma^2+\frac{2\sigma^2\alpha}{n}.
\end{equation}
Therefore, an unbiased risk estimate is given by
\begin{equation}
\label{eq:unbiased-risk-estimate-for-LST-SSP}
\eR_{n,k}(\alpha\1_n)
=\frac{1}{n}\|\evmu_{k,\alpha\1_n}-\y\|^2-\sigma^2+\frac{2\sigma^2\alpha}{n}
\end{equation}
which can be used as a model selection criterion for choosing an
optimal $k$ if we replace $\sigma^2$ with its estimate $\esigma^2$.  For
this case, an optimal scaling value that minimizes the risk is given by
\begin{equation}
\alpha_{\rm opt}=\frac{\E\left[\sum_{j\in\eK_k}\eb_{k,j}\ec_j\right]+\sigma^2k/n}
{\E\left[\sum_{j\in\eK_k}\eb_{k,j}^2\right]}.
\end{equation}
In practical application, for example, 
\begin{equation}
\label{eq:lst-ssp-ealpha}
\ealpha=\frac{\sum_{j\in\eK_k}\eb_{k,j}\tc_j+\esigma^2k/n}{\sum_{j\in\eK_k}\eb_{k,j}^2}
\end{equation}
can be an estimate of the optimal value.

\subsection{Definitions for theorems and lemmas}

We state some definitions used below.  We define
$c_j=\sqrt{n}\beta_j/\sigma$ and $\c=(c_1,\ldots,c_n)$.  We define
$\tc_j=\sqrt{n}\ec_j/\sigma$ and $\tcv=(\tc_1,\ldots,\tc_n)$, by which
$\tcv\sim N(\c,\I_n)$ ; i.e.  $\tc_j\sim N(\sqrt{n}\beta_j,1)$ and
$\tc_1,\ldots,\tc_n$ are mutually independent. We define
$\ttheta_k=\sqrt{n}\etheta_k/\sigma=|\tc_{p_{k+1}}|$ in applying LST. Correspondingly, by
(\ref{eq:def-eb-k-j}), we define
\begin{equation}
\label{eq:def-tb-k-j}
\tb_{k,j}=\sqrt{n}\eb_{k,j}/\sigma=(|\tc_j|-\ttheta_k)s_j,~j=1,\ldots,n
\end{equation}
and $\tbv_k=(\tb_{k,1},\ldots,\tb_{k,n})$.
We also define
$\oc_i=\tc_i-\sqrt{n}\beta_j/\sigma$, by which $\oc_1,\ldots,\oc_n$ are
i.i.d. according to $N(0,1)$ by the definition of
$\tc_1,\ldots,\tc_n$. 
For an event $E$, we denote the complement of $E$ by $\oE$ and indicator
function of $E$ by $I_{E}$.  
We define $E_{n,l}^*=\{p_l\in K^*\}$ 
and $E_n^*=\bigcap_{l=1}^{k^*}E_{n,l}^*$. 
We also define
$F_j=\{\tc_j^2\le\max_{i\in\oK^*}\tc_i^2\}$. 
We denote
$\chi^2$ distribution with one degree of freedom by $\chi^2_1$.

\subsection{Definition of adaptive scaling}

The purpose of scaling is to avoid excess shrinkage of coefficients
of un-removed components. Then, it is reasonable to choose $\alpha_j$ so as
to satisfy $\alpha_j\eb_{k,j}=\ec_j$. This yields
\begin{equation}
\label{eq:approx-in-adaptive-scaling}
\alpha_j=1/\left(1-\etheta_k/|\ec_j|\right)\simeq 1+\etheta_k/|\ec_j|,~j=1,\ldots,n
\end{equation}
when $\etheta_k/|\ec_j|$ is small.  This approximation is valid since an
un-removed component may have a coefficient estimate that is
sufficiently larger than an appropriate threshold level.  In this paper,
we hence employ
\begin{equation}
\label{eq:component-wise-scaling}
\ealpha_j=
\begin{cases}
1+\etheta_k/|\ec_j|=1+\ttheta_k/|\tc_j| & \ec_j\neq 0\\
\alpha & \ec_j=0
\end{cases}
,~j=1,\ldots,n  
\end{equation}
as empirical values, where $\alpha$ is a finite constant that is defined
to avoid $\ealpha_j=\infty$ when $\ec_j=0$. We define
$\evalpha=(\ealpha_1,\ldots,\ealpha_n)$.
(\ref{eq:component-wise-scaling}) gives data-dependent and
component-wise scaling value. We refer this scaling method as adaptive
scaling.  By (\ref{eq:component-wise-scaling}), the adaptive scaling
value is always larger than one.  Note also that $\ealpha_j$ is valid
only to $j\in\eK_k$ since $\tb_{k,j}=0$ for $j\notin\eK_k$.
Let $\eAv$ be an $n\times n$ diagonal matrix whose $(j,j)$ element is $\ealpha_j$.
We define a risk for our adaptive scaling estimator by
\begin{equation}
\label{eq:RAS-def}
\RAS(n,k)=\E\|\eAv\ebv_k-\vbeta\|^2.
\end{equation}

\subsection{Main results}

We state three theorems whose proofs are given in Appendix with
some lemmas.

\begin{theorem}
\label{theorem:Rnk-evalpha}
For $\evalpha$ defined in (\ref{eq:component-wise-scaling}),
\begin{equation}
\label{eq:theorem-Rnk-evalpha}
\RAS(n,k)=\frac{1}{n}\E\|\evmu_{k,\evalpha}-\y\|^2-\sigma^2
+\frac{2\sigma^2k}{n}
+\frac{2\sigma^2}{n}\E\left[\sum_{j\in\eK_k}(\ealpha_j-1)^2\right]
\end{equation}
holds. 
\end{theorem}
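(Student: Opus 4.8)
The plan is to derive the identity from a Stein/SURE-type argument carried out in the coefficient domain, and then to evaluate the resulting divergence term explicitly. First I would use the orthogonality condition $\G'\G=n\I_n$ to transfer the predictive-loss term on the right-hand side into the coefficient domain: since $\y=\G\ecv$ and $\evmu_{k,\evalpha}=\G\eAv\ebv_k$, one gets $\frac{1}{n}\|\evmu_{k,\evalpha}-\y\|^2=\|\eAv\ebv_k-\ecv\|^2$. Hence it suffices to establish
\[
\E\|\eAv\ebv_k-\vbeta\|^2=\E\|\eAv\ebv_k-\ecv\|^2-\sigma^2+\frac{2\sigma^2k}{n}+\frac{2\sigma^2}{n}\E\Big[\sum_{j\in\eK_k}(\ealpha_j-1)^2\Big].
\]

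Second, I would expand $\|\eAv\ebv_k-\vbeta\|^2=\|\eAv\ebv_k-\ecv\|^2+2\sum_j(\ealpha_j\eb_{k,j}-\ec_j)(\ec_j-\beta_j)+\|\ecv-\vbeta\|^2$ and take expectations. Because $\ecv\sim N(\vbeta,(\sigma^2/n)\I_n)$, the last term contributes $\E\|\ecv-\vbeta\|^2=\sigma^2$, while each cross term is handled by Stein's lemma applied to the single coordinate $\ec_j\sim N(\beta_j,\sigma^2/n)$ with the remaining coordinates held fixed, giving $\E[(\ec_j-\beta_j)(\ealpha_j\eb_{k,j}-\ec_j)]=\frac{\sigma^2}{n}\E[\partial(\ealpha_j\eb_{k,j})/\partial\ec_j-1]$. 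Summing over $j$ reduces the whole problem to evaluating the divergence $\E\big[\sum_{j=1}^n\partial(\ealpha_j\eb_{k,j})/\partial\ec_j\big]$.

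Third, and this is the crux, I would compute the diagonal derivatives. The key simplification is that for $j\in\eK_k$ (where $|\ec_j|>\etheta_k$) the scaled estimator collapses to a clean closed form, $\ealpha_j\eb_{k,j}=(1+\etheta_k/|\ec_j|)(|\ec_j|-\etheta_k)s_j=\ec_j-\etheta_k^2/\ec_j$, whereas $\ealpha_j\eb_{k,j}=0$ for $j\notin\eK_k$. With the other coordinates fixed, $\etheta_k=|\ec_{p_{k+1}}|$ is the $k$th largest absolute value among the remaining coordinates as long as $j$ stays active, so it is locally constant in $\ec_j$; differentiating yields $\partial(\ealpha_j\eb_{k,j})/\partial\ec_j=1+\etheta_k^2/\ec_j^2=1+(\ealpha_j-1)^2$ for $j\in\eK_k$ and $0$ otherwise. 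Since the divergence involves only these diagonal derivatives, the dependence of $\etheta_k$ on the active coordinates (an off-diagonal effect on the other $\ealpha_i\eb_{k,i}$) never enters. Summing over the $k$ active indices gives $\E\big[\sum_j\partial(\ealpha_j\eb_{k,j})/\partial\ec_j\big]=k+\E\big[\sum_{j\in\eK_k}(\ealpha_j-1)^2\big]$, and substituting into the second step produces exactly the claimed identity.

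The main obstacle I anticipate is the rigorous justification of Stein's lemma in the presence of the estimator's non-smoothness: the map $\ec_j\mapsto\ealpha_j\eb_{k,j}$ has a kink where $|\ec_j|$ crosses the threshold, and the active set $\eK_k$ together with the index $p_{k+1}$ are themselves piecewise-constant in $\ecv$. I would resolve this by checking that, with the other coordinates fixed, $\ec_j\mapsto\ealpha_j\eb_{k,j}$ is continuous across $|\ec_j|=\etheta_k$ (both branches vanish there) and piecewise $C^1$, hence absolutely continuous with the a.e. derivative found above, so the coordinatewise Stein identity applies; ties and boundary crossings form a null set by (\ref{eq:dist-of-ecv}). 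The ordering structure can be made precise through the events $E_n^*$ and $F_j$ introduced earlier, which I expect the supporting lemmas in the Appendix to exploit.
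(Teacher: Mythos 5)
Your proposal is correct and follows essentially the same route as the paper's proof: the same quadratic decomposition around $\ecv$, the same coordinatewise application of Stein's lemma with the other coordinates held fixed, the same observation that the threshold is locally the $k$th largest absolute value among the remaining coordinates (hence locally constant in $\ec_j$), the same closed form $\ec_j-\etheta_k^2/\ec_j$ on the active branch, and the same continuity/absolute-continuity justification at the kink. The only cosmetic difference is that the paper first rescales to the unit-variance coefficients $\tcv$ before invoking Stein's lemma, whereas you work directly with $\ec_j\sim N(\beta_j,\sigma^2/n)$.
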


\begin{theorem}
\label{theorem:ealpha_j-convergence}
We define
\begin{equation}
\label{eq:def-epsilon_n}
\epsilon_{j,n}=\epsilon_{j,n}(\delta)=\frac{1}{|\beta_j|+\delta}\sqrt{\frac{2\log n}{n}}
\end{equation}
with $\delta>0$.
For $j\in K^*$,
\begin{equation}
\label{eq:ealpha_j-lower-bound-K*-0}
\lim_{n\to\infty}\P\left[|\ealpha_j-1|>\epsilon_{j,n}\right]=0
\end{equation}
holds. This implies that, for $j\in K^*$,
\begin{equation}
\label{eq:ealpha_j-lower-bound-K*}
\lim_{n\to\infty}\P\left[|\ealpha_j-1|>\epsilon\right]=0
\end{equation}
holds for any $\epsilon>0$. On the other hand, 
we assume that $k> k^*$. Then, for $j\in\oK^*$, 
\begin{equation}
\label{eq:ealpha_j-lower-bound-oK*}
\lim_{n\to\infty}\P\left[\ealpha_j<2-\epsilon\right]=0
\end{equation}
holds for any $\epsilon>0$.
\end{theorem}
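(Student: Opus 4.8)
The plan is to reduce both assertions to the almost-sure identity $\ealpha_j-1=\ttheta_k/|\tc_j|$, valid because $\ec_j\neq 0$ with probability one; since $\ttheta_k\ge 0$ and $|\tc_j|>0$ this ratio is nonnegative, so $|\ealpha_j-1|=\ttheta_k/|\tc_j|$ and everything comes down to comparing the threshold $\ttheta_k=|\tc_{p_{k+1}}|$ with the single coordinate $|\tc_j|$. I would work throughout on the recovery event $E_n^*$, on which the $k^*$ largest of the $|\tc_i|$ are exactly the true coordinates; the Appendix lemmas give $\P[E_n^*]\to 1$ (equivalently $\P[F_j]\to 0$ for $j\in K^*$, since a true coordinate $|\tc_j|\asymp\sqrt{n}|\beta_j|/\sigma$ dwarfs the noise maximum). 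I treat $k\ge k^*$ as the standing regime, so that on $E_n^*$ the threshold is supplied by a noise coordinate, i.e. $\ttheta_k\le\max_{i\in\oK^*}|\tc_i|$; for $k<k^*$ the threshold sits at the signal scale and the statement genuinely fails, which is why that range is excluded.

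For $j\in K^*$, to prove (\ref{eq:ealpha_j-lower-bound-K*-0}) I would bound numerator and denominator of $\ttheta_k/|\tc_j|$ separately. Writing $\tc_j=c_j+\oc_j$ with $c_j=\sqrt{n}\beta_j/\sigma$ and $\oc_j\sim N(0,1)$, the lower bound $|\tc_j|\ge|c_j|-|\oc_j|=\sqrt{n}|\beta_j|/\sigma-O_{\P}(1)$ puts the denominator at the $\sqrt{n}$ scale, while on $E_n^*$ the numerator is controlled by the Gaussian maximal inequality $\P[\max_{i\in\oK^*}|\tc_i|>t]\le 2(n-k^*)e^{-t^2/2}$. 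Rewriting the event as $\{\ttheta_k>\epsilon_{j,n}|\tc_j|\}$ and substituting the high-probability lower bound for $|\tc_j|$, the definition of $\epsilon_{j,n}$ is calibrated precisely so that the effective threshold $\epsilon_{j,n}|\tc_j|$ equals a multiple $\kappa\sqrt{2\log n}$ of the noise scale; the maximal inequality then gives $2(n-k^*)e^{-\kappa^2\log n}\asymp n^{1-\kappa^2}\to 0$. Statement (\ref{eq:ealpha_j-lower-bound-K*}) follows at once, because $\epsilon_{j,n}\to 0$ forces $\{|\ealpha_j-1|>\epsilon\}\subseteq\{|\ealpha_j-1|>\epsilon_{j,n}\}$ for all large $n$.

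For $j\in\oK^*$ with $k>k^*$, the proof of (\ref{eq:ealpha_j-lower-bound-oK*}) is combinatorial. The elementary equivalence $\ealpha_j<2\iff\ttheta_k/|\tc_j|<1\iff|\tc_j|>\ttheta_k$ shows that $\ealpha_j<2$ occurs exactly when the noise index $j$ survives thresholding, so $\{\ealpha_j<2-\epsilon\}\subseteq\{\ealpha_j<2\}=\{|\tc_j|>\ttheta_k\}$. On $E_n^*$ the $k^*$ largest coordinates are the true ones, whence $|\tc_j|>\ttheta_k=|\tc_{p_{k+1}}|$ forces $|\tc_j|$ to be among the $k-k^*$ largest of the $n-k^*$ i.i.d. noise coordinates; by exchangeability of $\{|\tc_i|:i\in\oK^*\}$ the rank of the fixed $j$ is uniform, so this has probability $(k-k^*)/(n-k^*)$. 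Hence $\P[\ealpha_j<2-\epsilon]\le\P[\oE_n^*]+(k-k^*)/(n-k^*)\to 0$, using $\P[E_n^*]\to 1$ and that $k-k^*$ is fixed.

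The main obstacle is the signal case. Part (\ref{eq:ealpha_j-lower-bound-oK*}) is essentially free once $E_n^*$ and exchangeability are available, but (\ref{eq:ealpha_j-lower-bound-K*-0}) demands that the noise maximum, living at scale $\sqrt{2\log n}$, be beaten by $\epsilon_{j,n}|\tc_j|$ with a union-bound exponent strictly below one, i.e. $\kappa>1$. Securing this margin is the crux: it requires carefully pairing the $\sqrt{2\log n}$ noise scale against the $\sqrt{n}|\beta_j|/\sigma$ signal scale and exploiting the separation lemma $\P[E_n^*]\to 1$, with the parameter $\delta$ entering exactly to keep the exponent on the correct side of one.
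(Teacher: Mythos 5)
Your reduction to $\ealpha_j-1=\ttheta_k/|\tc_j|$ is the right starting point, and your treatment of the noise case $j\in\oK^*$ is correct but genuinely different from the paper's. The paper proves (\ref{eq:ealpha_j-lower-bound-oK*}) by a two-sided extreme-value sandwich: $\ttheta_k^2>2(1-\epsilon)\log n$ with high probability via the lower-bound half of Lemma \ref{lemma:prob-bound-mth-largest-chi2} (which rests on Theorem 2.2.1 of Leadbetter et al.), against $\tc_j^2\le\max_{i\in\oK^*}\tc_i^2\le 2\log n$ with high probability. Your observation that $\{\ealpha_j<2\}=\{|\tc_j|>\ttheta_k\}$ is exactly the survival event, combined with exchangeability of the $n-k^*$ noise coordinates on $E_n^*$, replaces all of that with the bound $\P[\oE_n^*]+(k-k^*)/(n-k^*)$: more elementary, no extreme-value lower bound, and an explicit rate. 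It is valid provided $k$ is held fixed (or $k=o(n)$), which you note.

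The gap is in the signal case, exactly where you locate ``the crux'' without closing it. With $|\tc_j|\approx\sqrt{n}|\beta_j|/\sigma$, your effective threshold is $\epsilon_{j,n}|\tc_j|\approx\kappa\sqrt{2\log n}$ with $\kappa=\frac{|\beta_j|}{\sigma(|\beta_j|+\delta)}$, which is strictly less than $1$ whenever $\sigma\ge|\beta_j|/(|\beta_j|+\delta)$ (in particular for $\sigma=1$): the $+\delta$ in the denominator of (\ref{eq:def-epsilon_n}) shrinks $\epsilon_{j,n}$ and so works \emph{against} you, and your union bound $n^{1-\kappa^2}$ diverges. No sharpening of the maximal inequality can rescue $\kappa<1$, because $\ttheta_k$ actually concentrates at $\sqrt{2\log n}$ (the matching lower bound in Lemma \ref{lemma:prob-bound-mth-largest-chi2}), so the event $\{\ttheta_k/|\tc_j|>\epsilon_{j,n}\}$ would then have probability tending to one. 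Closing the argument requires recalibrating $\epsilon_{j,n}$, e.g.\ to $\frac{\sigma}{|\beta_j|-\sigma\delta}\sqrt{2\log n/n}$ for small $\delta$, which makes $\kappa=|\beta_j|/(|\beta_j|-\sigma\delta)>1$ and lets your scheme go through. To be fair, the paper's own proof hits the same wall: in (\ref{eq:ealpha_j-bound-in*-4}) the chain $\P[\sqrt{n}|\beta_j|-|\oc_j|<(|\beta_j|+\delta)\sqrt{n}]$ reduces to $\P[|\oc_j|>-\delta\sqrt{n}]=1$, not $\P[|\oc_j|>\delta\sqrt{n}]$, so the sign of $\delta$ (and the missing factor of $\sigma$) is a defect of the theorem's normalization rather than of your strategy. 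Still, as written your step fails for the stated $\epsilon_{j,n}$, and asserting that $\delta$ ``keeps the exponent on the correct side of one'' is the opposite of what it does here.
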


\begin{theorem}
\label{theorem:R(1)-R(evalpha)}
\begin{equation}
\label{eq:theorem-R(1)-R(evalpha)}
R_{n,k^*}(\1_n)-\RAS(n,k^*)\ge 2\sigma^2k^*\frac{\log n}{n}
\end{equation}
holds for a sufficiently large $n$.
\end{theorem}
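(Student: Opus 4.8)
The plan is to work entirely in the coefficient domain, where by definition $R_{n,k^*}(\1_n)=\E\|\ebv_{k^*}-\vbeta\|^2$ and $\RAS(n,k^*)=\E\|\eAv\ebv_{k^*}-\vbeta\|^2$, and to compute their difference termwise. For any index $j\notin\eK_{k^*}$ we have $\eb_{k^*,j}=0$, so both estimators equal $0$ there and contribute the same $\beta_j^2$; hence only $j\in\eK_{k^*}$ survive. For $j\in\eK_{k^*}$ the two residuals around $\ec_j$ take the simple forms $\eb_{k^*,j}-\ec_j=-\etheta_{k^*}s_j$ and $\ealpha_j\eb_{k^*,j}-\ec_j=-(\etheta_{k^*}^2/|\ec_j|)s_j$, using $\ealpha_j\eb_{k^*,j}=(|\ec_j|-\etheta_{k^*}^2/|\ec_j|)s_j$. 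Writing each squared error as a perturbation of $\ec_j-\beta_j$ and subtracting yields the exact identity
\[
R_{n,k^*}(\1_n)-\RAS(n,k^*)=\E\sum_{j\in\eK_{k^*}}\left[\etheta_{k^*}^2-\frac{\etheta_{k^*}^4}{\ec_j^2}-2(\ec_j-\beta_j)s_j\etheta_{k^*}\Big(1-\frac{\etheta_{k^*}}{|\ec_j|}\Big)\right].
\]
(Equivalently one may subtract Theorem~\ref{theorem:Rnk-evalpha} from the unbiased-risk representation of the $\valpha=\1_n$ case; the divergence contributions cancel and the weights $(\ealpha_j-1)^2=\etheta_{k^*}^2/\ec_j^2$ reproduce the correction.)

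The leading contribution is the first term, which sums to $k^*\,\E[\etheta_{k^*}^2]$ since $|\eK_{k^*}|=k^*$. I would then show the other two terms are of smaller order than $\log n/n$. On the event $E_n^*$ one has $\eK_{k^*}=K^*$, with $|\ec_j|$ concentrating near $|\beta_j|>0$ and $\etheta_{k^*}\to0$, so the quartic term is $\etheta_{k^*}^4/\ec_j^2=O((\log n)^2/n^2)$. The cross term has vanishing leading expectation because $\E[\ec_j-\beta_j]=0$ and $\etheta_{k^*}$ (the largest noise coordinate) is essentially independent of the true coordinate $\ec_j$; I would make this precise by conditioning on $\{\ec_i:i\in\oK^*\}$ (which fixes $\etheta_{k^*}$) or by Stein's lemma in $\ec_j$, in either case bounding this term by $O(\log n/n^2)$. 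Both corrections are thus $o(\log n/n)$, and the rare event $\oE_n^*$ is negligible because $\P(\oE_n^*)$ decays faster than any power of $n$ (the true coordinates exceed $\min_{j\in K^*}|\beta_j|/2$ while every noise coordinate stays below it with overwhelming probability), whereas $\etheta_{k^*}^2\le\max_j\ec_j^2$ has bounded expectation.

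It then remains to bound $\E[\etheta_{k^*}^2]$ from below. On $E_n^*$ we have $\etheta_{k^*}=\max_{i\in\oK^*}|\ec_i|$, and after the scaling $\tc_i=\sqrt n\,\ec_i/\sigma$ this gives $\tfrac{n}{\sigma^2}\etheta_{k^*}^2=\max_{i\in\oK^*}\tc_i^2$, the maximum of $n-k^*$ independent $\chi^2_1$ variables (recall $\tc_i\sim N(0,1)$ for $i\in\oK^*$). Using the $\chi^2_1$ tail together with the extreme-value machinery built around the events $F_j$, I would establish $\E[\max_{i\in\oK^*}\tc_i^2]\ge 2\log n\,(1-o(1))$, so that $\E[\etheta_{k^*}^2]\ge(2\sigma^2\log n/n)(1-o(1))$, and combine this with the negligibility of the corrections to conclude $R_{n,k^*}(\1_n)-\RAS(n,k^*)\ge 2\sigma^2 k^*\log n/n$ for all sufficiently large $n$.

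The main obstacle is precisely this last step. The entire theorem lives at the exact leading constant $2\log n$ of the expected squared maximum of the noise coordinates, so the extreme-value lower bound must be sharp in its constant and cannot afford to lose a factor; a crude union bound will not suffice. Moreover, the corrections identified above are negative and sit only a logarithmic factor below the main term, so the argument must simultaneously pin down the leading constant of $\E[\max_{i\in\oK^*}\tc_i^2]$ (equivalently, of the largest order statistic of $n-k^*$ half-normals) and verify that these corrections, together with the contribution of $\oE_n^*$, are genuinely of smaller order. This constant-sharp control of the Gaussian maximum is where I expect the real work to lie.
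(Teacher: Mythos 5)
Your route is essentially the paper's own. Your exact identity is the ``pre-Stein'' form of the paper's starting point: the paper subtracts the representation of Theorem~\ref{theorem:Rnk-evalpha} from (\ref{eq:risk-for-LST}), and since $(\ealpha_{p_j}-1)^2\tc_{p_j}^2=\ttheta_{k^*}^2$ and $(\ealpha_{p_j}-1)^2\ttheta_{k^*}^2=\ttheta_{k^*}^4/\tc_{p_j}^2$, this is precisely your decomposition with the cross term already converted by Stein's lemma into $-2\E[(\ealpha_{p_j}-1)^2]=-2\E[\ttheta_{k^*}^2/\tc_{p_j}^2]$ --- i.e.\ the alternative you mention parenthetically. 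The paper then divides by $2\log n$ and shows the three normalized terms converge to $1$, $0$ and $0$: the first via Lemma~\ref{lemma:E-ttheta_k^2-bound} and Lemma~\ref{lemma:max-chi2-E} (moment convergence of the centered and scaled maximum of i.i.d.\ $\chi^2_1$ variables to the Gumbel limit on $E_n^*$, which is exactly the ``constant-sharp control of the Gaussian maximum'' you anticipated), the other two via Theorem~\ref{theorem:ealpha_j-convergence}, Lemma~\ref{lemma:P-oEn*-bound} and Cauchy--Schwarz, splitting on $E_n^*$ and on the event $\{(\ealpha_{p_j}-1)^2>\epsilon_n^2\}$. Your identification of the leading term $k^*\E[\ttheta_{k^*}^2]$ and of the orders of the two corrections matches the paper's.

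The obstacle you flag at the end is genuine, and you should be aware that the paper does not actually clear it. Proving $\E[\ttheta_{k^*}^2]/(2\log n)\to 1$ and that the corrections are $o(\log n/n)$ yields only $R_{n,k^*}(\1_n)-\RAS(n,k^*)\ge(2-\epsilon)\sigma^2k^*\log n/n$ for large $n$, not the stated inequality with the literal constant $2$. Worse, every deviation from the leading term points downward: the two correction terms are nonnegative quantities entering with minus signs, and by Lemma~\ref{lemma:max-chi2-E} the expected maximum of $n-k^*$ i.i.d.\ $\chi^2_1$ variables is $t_n+2\gamma+o(1)=2\log n-\log\log n+O(1)$, which sits \emph{below} $2\log n$ for large $n$. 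So your instinct that the argument ``cannot afford to lose a factor'' is correct, and in fact no argument along these lines can deliver the exact constant; the conclusion should be read with $2$ replaced by $2-\epsilon$ (or $\log n$ replaced by the proper centering $t_n/2$). Modulo that caveat --- which is a defect of the statement rather than of your plan --- your proposal is sound, and completing it requires exactly the ingredients the paper supplies: the Gumbel moment convergence on $E_n^*$, and the polynomial decay of $\P[\oE_n^*]$ from Lemma~\ref{lemma:P-oEn*-bound} combined with Cauchy--Schwarz to dispose of the complementary event.
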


We give some remarks.

\begin{itemize}

\item By Theorem \ref{theorem:Rnk-evalpha},
\begin{align}
\label{eq:eRnk_evalpha}
\eR_{n,k}(\evalpha)=\frac{1}{n}\|\evmu_{k,\alpha}-\y\|^2-\sigma^2+\frac{2k\sigma^2}{n}
+\frac{2\sigma^2}{n}\sum_{j\in\eK_k}(\ealpha_j-1)^2
\end{align}
is an unbiased estimator of risk under adaptive scaling with
$\evalpha$ defined by (\ref{eq:component-wise-scaling}). Therefore, this
      can be a model selection criterion for choosing an
      optimal $k$ if we can set an appropriate estimate of noise
      varinace $\sigma^2$ in (\ref{eq:eRnk_evalpha}).

\item By Lemma \ref{lemma:P-oEn*-bound}, the probability that all of
      true components are un-removed is high when $k\ge k^*$ and $n$ is
      sufficiently large; i.e. LST has a kind of consistency in
      selecting true components if those exist. Note that since our
      adaptive scaling is applied to LST estimator, this consistency
      result applies to adaptive scaling estimators.

\item Theorem \ref{theorem:ealpha_j-convergence} says that, in a large
      sample situation, scaling values are larger than $2$ for
      components that are not true. Some of non-true components are
      selected when $k>k^*$. This excess expansion of coefficient
      estimators for non-true components may cause a high risk for
      $k>k^*$. Therefore, $\RAS(n,k)>R_{n,k}(\1_n)$ may hold for
      $k>k^*$ even though $\RAS(n,k^*)<R_{n,k^*}(\1_n)$ holds
by Theorem \ref{theorem:R(1)-R(evalpha)}. This fact seems to be
      disadvantage of introducing our adaptive scaling. However, it may
      not be so from the viewpoint of model selection since this
      property allows us to identify the minimum of risk curve; i.e. risk is small
      at $k=k^*$ while it is large when $k\neq k^*$.
Hence, nearly optimal $k$ is expected to be found
according to a model selection criterion given by (\ref{eq:eRnk_evalpha}).
And, at such an optimal $k$, a consistent choice of a set of true
      components and a low risk value are guaranteed by Lemma
      \ref{lemma:P-oEn*-bound} and Theorem \ref{theorem:R(1)-R(evalpha)}
      respectively. This speculation is verified in numerical
      experiments in the next section. 

\item Since the least squares estimators of coefficients of true
components tend to be large, approximation in
(\ref{eq:approx-in-adaptive-scaling}) is valid for them. Therefore, LST
estimators for true components are nearly the least squares
      estimators. And, as mentioned above, true components may be
      consistently selected according to (\ref{eq:eRnk_evalpha}) if
      variance estimate is suitable. Therefore, a model estimated by our
      adaptive scaling scheme may be close to one estimated by a hard
      thresholding for which it is difficult to establish a model selection procedure.

\end{itemize}

\section{Numerical experiments}

\subsection{Case of known true components}

Consider a set of $n$ functions
$G_n=\left\{g_1,g_2,\ldots,g_n\right\}$ in which
\begin{align}
g_k(x)&=
\begin{cases}
1 & k=1\\
\sqrt{2}\cos(k x/2) & \mbox{$k$ : even and $k\neq 1,n$}\\
\sqrt{2}\sin(k x/2).& \mbox{$k$ : odd and $k\neq 1,n$}\\
\cos(k x/2) & k=n\\
\end{cases}.
\end{align}
The design matrix constructed by $G_n$ satisfies the orthogonality
condition of (\ref{eq:ot-cond}) if $x_i=2\pi (i-1)/n$ for $i=1,\ldots,n$
and $n$ is even. These two conditions are satisfied in our experiment
here. We set $K^*=\{2,4,6,8\}$ and
$(\beta_2,\beta_4,\beta_6,\beta_8)=(2.0,-1.5,1.0,-0.5)$, by which
$h(x_i)=\sum_{k\in K^*}\beta_kg_k(x_i)$; i.e.  $g_k$, $k\in K^*$ are
true components.  We set $\sigma^2=1$ for Gaussian noise variance. We
also set $n=500$ and the maximum number of components that is included
in a model is $50$. For an artificially generated data, we apply LST,
LST-SSP(LST with single scaling parameter) and LST-AS(LST with an
adaptive scaling). We employ (\ref{eq:lst-ssp-ealpha}) as an empirical
scaling value for LST-SSP. Adaptive scaling values for LST-AS are given
by (\ref{eq:component-wise-scaling}). For each method, we calculate the
approximated risk that is the mean-squared error between a true function
output and estimated output on data points.  We also calculate the risk
estimate (unbiased estimate of risk). It is given by
(\ref{eq:risk-for-LST}) with $\alpha=1$ for LST,
(\ref{eq:risk-for-LST-SSP}) with $\ealpha$ in (\ref{eq:lst-ssp-ealpha})
for LST-SSP and (\ref{eq:eRnk_evalpha}) with $\ealpha_j$ in
(\ref{eq:component-wise-scaling}) for LST-AS.  We need to estimate noise
variance in calculating a risk estimate that is employed as a model
selection criterion in applications. We here estimate it by the unbiased
estimate of noise variance under a linear regression with a set of $250$
components that includes true components. We repeat this procedure
for $1000$ times.

\begin{figure}[p]
\begin{center}
\includegraphics[width=90mm]{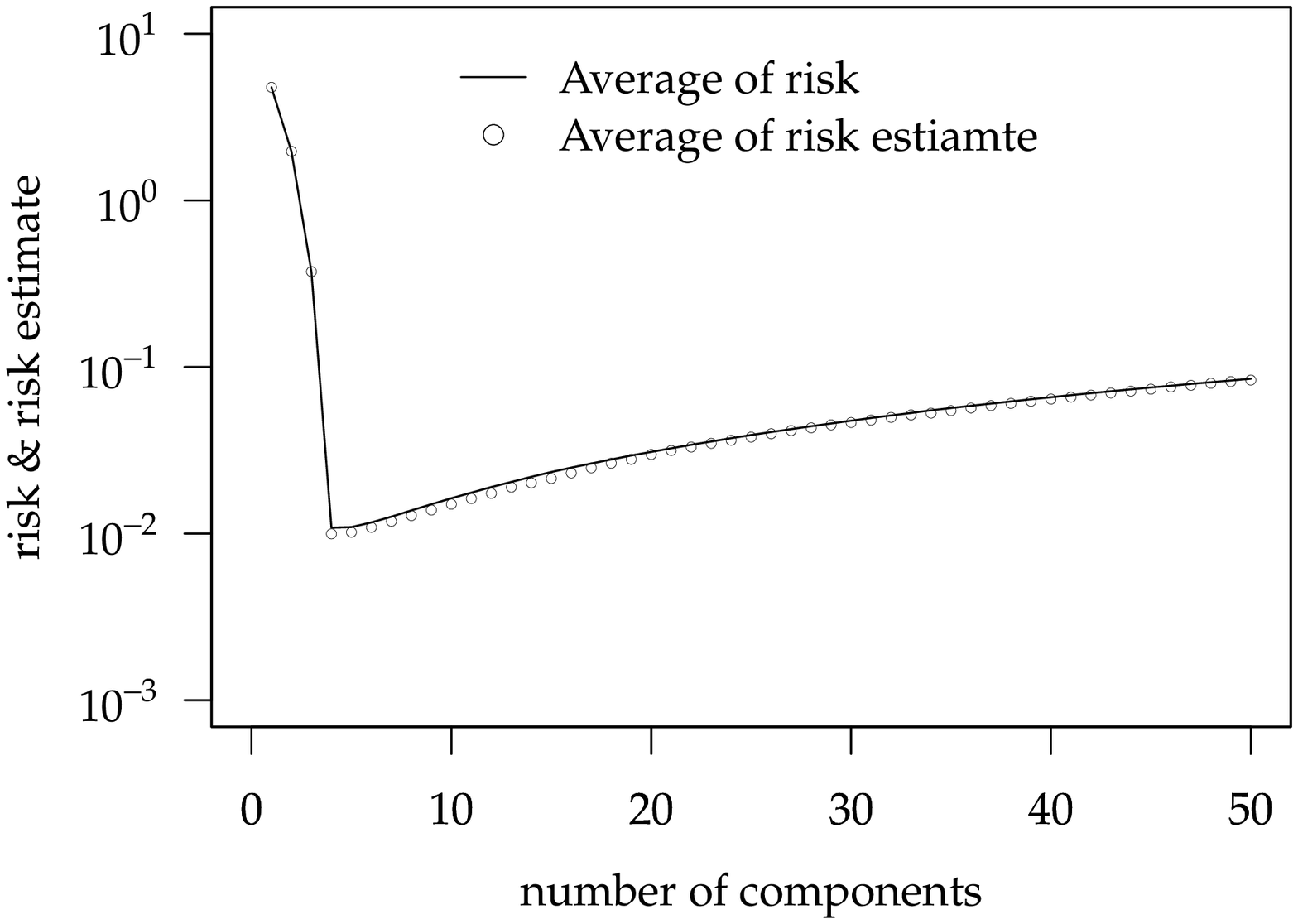} 
\caption{Averages of risk and risk estimate for LST-AS.}
\label{fig:risk-and-risk-estimate-LST-AS}
\end{center}
\begin{center}
\includegraphics[width=90mm]{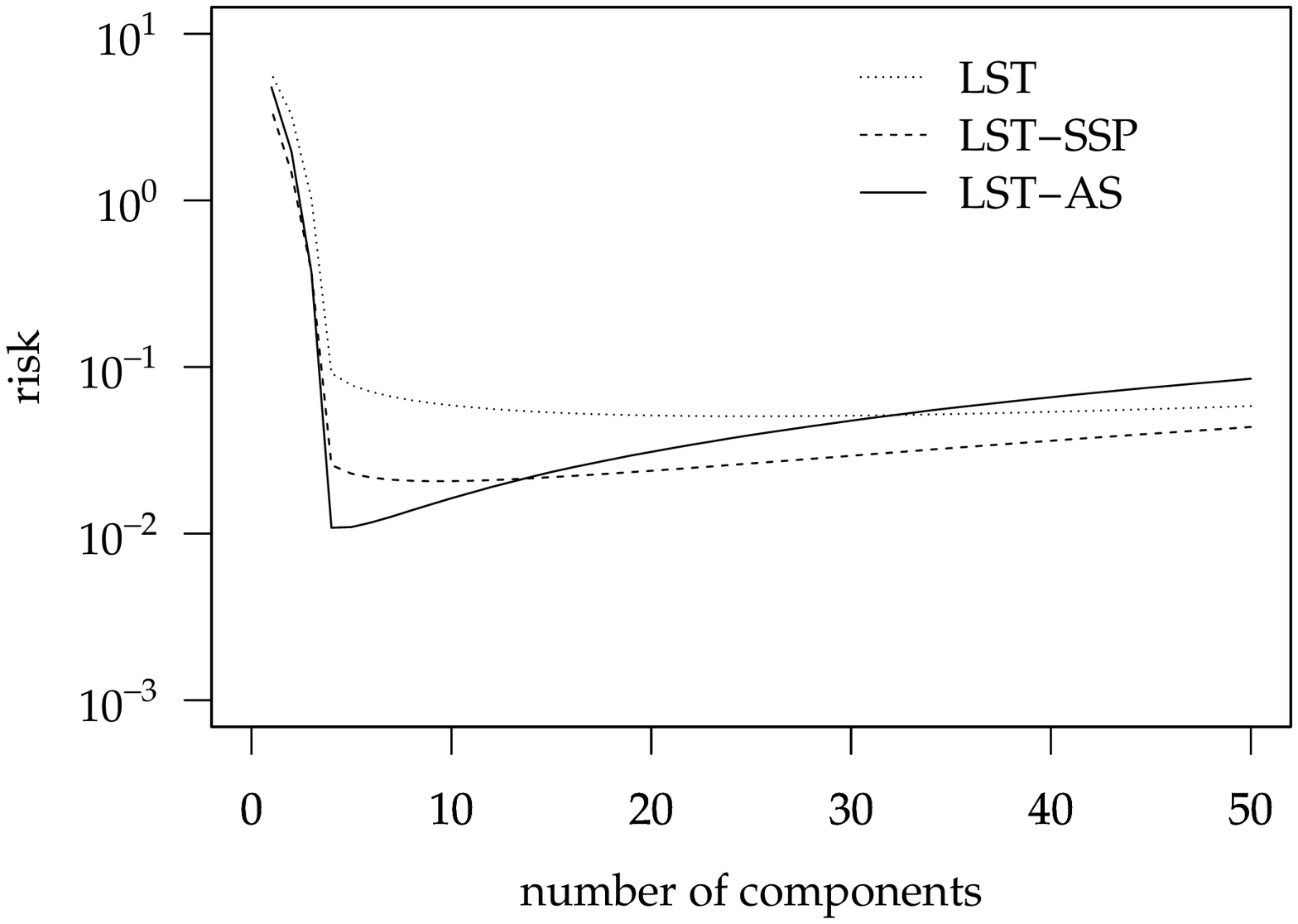} 
\caption{Averages of risk for LST, LST-SSP and LST-AS.}
\label{fig:risk-for-LST-SSP-AS}
\end{center}
\end{figure}

We show averages of (approximated) risks and risk estimates for LST-AS
in Figure \ref{fig:risk-and-risk-estimate-LST-AS}. We also show averages
of risks for LST, LST-SSP and LST-AS in Figure
\ref{fig:risk-for-LST-SSP-AS}. In Figure
\ref{fig:risk-and-risk-estimate-LST-AS}, we can see that
(\ref{eq:eRnk_evalpha}) is actually valid as an unbiased risk estimate
under LST-AS even when noise variance is replaced with its estimator. In
Figure \ref{fig:risk-for-LST-SSP-AS}, at around a small number of
components, risk of LST-AS is minimized and is smaller than those of LST
and LST-SSP. However, risk of LST-AS tends to be larger than those of
LST and LST-SSP as the number of components increases.  This is
consistent with the remark on Theorem \ref{theorem:R(1)-R(evalpha)} and
Theorem \ref{theorem:ealpha_j-convergence}. In other words, an optimal
number of components can clearly be identified in risk curve of LST-AS
while risk curves of LST and LST-SSP are nearly flat around the minimum
value in Figure \ref{fig:risk-for-LST-SSP-AS}. We emphasize two
important points in this result. The first one is that, as guaranteed by
Theorem \ref{theorem:R(1)-R(evalpha)}, risk value of LST-AS is smaller
than that of LST at around an optimal number of components. The second
one is that it can be found via a model selection based on risk
estimate. In Table \ref{tbl:risk-and-un-removed-components}, we show the
averaged risk value and the average numbers of un-removed components
for models that are selected by risk estimates.
From Table \ref{tbl:risk-and-un-removed-components}, we can say that
LST-AS gives low risk at a sparse representation.

\begin{table}[h]
\begin{center}
\caption{Average of risk and the number of un-removed components 
\label{tbl:risk-and-un-removed-components} 
selected according to risk estimate. The standard deviation
is showed in the bracket.}

\vspace{2mm}

\begin{tabular}{|c|c|c|}\hline
method & risk & \#un-removed components\\\hline
LST & 0.0546~(0.0186) & 26.43~(12.99)\\\hline
LST-SSP & 0.0268~(0.0147) & 16.83~(12.5)\\\hline
LST-AS & 0.0232~(0.0202) & 10.18~(8.21) \\\hline
\end{tabular}
\end{center}
\end{table}

\subsection{Application to wavelet denoising}

Discrete wavelet transform is a popular tool for analysis, de-noising
and compression of signals and images; e.g. see \citep{BGG1998}. We here
consider an application of LST with adaptive scaling
to a problem of wavelet denoising\citep{DJ1994,DJ1995}. Let $y(t)$ ,
$t\in[0,1]$ be a signal. $n$ samples of $y(t)$ is denoted by
$y_i=y(t_i)$, $t_i=(i-1)/(n-1)$, $i=1,\ldots,n$. We define
$\y=(y_1,\ldots,y_n)$.  We assume that $n=2^J$ for a natural number $J$.
Let $\c_j=(c_{j,1},\ldots,c_{j,n_j})$ and
$\d_j=(d_{j,1},\ldots,d_{j,n_j},\ldots,d_{J-1,1},\ldots,d_{J-1,n_{J-1}})$
be approximation and detail coefficients at a level $J$ in discrete
wavelet transform, where $n_j=2^j$. We define $\w_j=(\c_j,\d_j)$ in
which we set $\w_J=\c_J$ for $j=J$. By setting $\c_J=\y$, the
decomposition algorithm with pre-determined wavelets calculates
$\w_{j-1}$ from $\c_j$ by decreasing $j=J,J-1,\ldots,J_0$, where $J_0$
is a fixed level determined by user. This procedure can be written by
\begin{equation}
\w_{J_0}=H_{J_0}\y, 
\end{equation}
where $H_{J_0}$ is an $n\times n$ orthonormal matrix that is determined
by coefficients of scaling and wavelet function; e.g. see
\citep{DJ1994,DJ1995}.  On the other hand, the reconstruction algorithm
calculates $\w_{j+1}$ from $\w_{j}$ by increasing
$j=J_0,\ldots,J-1,J$. This can be written by
\begin{equation}
\w_J=H_{J_0}'\w_{J_0}
\end{equation}
since $H_{J_0}$ is an orthonormal matrix.  Let $\Theta$ be an operator
on $\R^n$ into $\R^n$ such as a thresholding operator. In wavelet
denoising, $\w_{J_0}$ is processed by using $\Theta$ and obtain
$\owv_{J_0}=\Theta(\w_{J_0})$. We then obtain a denoised signal by
$\owv_J=H_{J_0}'\owv_{J_0}$. Note that, in applications, a simple and fast
decomposition/reconstruction algorithm is used instead of the above matrix
calculation; e.g. see \citep{BGG1998}.

We here compare the prediction accuracy and sparseness of LST-AS to those of LST,
LST-SSP and also universal soft-thresholding (UST) in \citep{DJ1994}.
Note that SURE shrink of \citep{DJ1995} is almost equivalent to
LST here.  In an application of UST, a
threshold level on the absolute values of coefficients at the $J_0$th
level is given by
\begin{equation}
\etheta_n=\sqrt{2\esigma^2\log n}, 
\end{equation}
where $\esigma^2$ is an estimate of noise variance.  In wavelet
denoising, the median absolute deviation (MAD) is a standard robust
estimate of noise variance. It is given by 
\begin{equation}
\esigma=\median\{|d_{J-1,1}|,\ldots,|d_{J-1,n_{J-1}}|\}/0.6745,
\end{equation}
where $d_{J-1,j}$, $j=1,\ldots,n_{J-1}$ is the smallest scale wavelet
coefficients that are heuristically known to be noise dominated
components. For LST, LST-SSP and LST-AS, we also employ this estimator
in a model selection criterion that is an unbiased risk estimate.

We choose ``heavisine'' and ``blocks'' given in \citep{DJ1994} as test
signals. The former is almost smooth and the latter has many
discontinuous points. Additive noise has a normal distribution with mean
$0$ and variance $\sigma^2=1$.  As in \citep{DJ1994}, signals are
rescaled so that signal-to-noise ratio is $7$.  The number of samples is
$n=1024$. We set $J_0=2$. In \citep{DJ1994}, in practical application,
it is employed a heuristic method which applies soft-thresholding only
for detail coefficients at a determined level. We do not obey this
heuristics and apply soft-thresholding to all coefficients in orthogonal
transformation for a fair comparison. This is because the choice of
a level at which thresholding applies largely depends on the performance
as in \citep{AY1996} and there is no systematic choice of such level. We
employ the orthogonal Daubechies wavelet with $8$ wavelet/scaling
coefficients.  For given samples, we apply LST, LST-SSP, LST-AS and UST,
in which the maximum number of un-removed components is set to $300$;
i.e. the maximum value of $k$ to be examined. We then calculate the mean
squared error between true signal outputs and estimated outputs on the
sampling points as an approximation of risk.  For LST, LST-SSP and
LST-AS, the mean squared error and risk estimate are obtained at each
$k$.  For UST, the number of un-removed components and risk value at a
selected size are obtained. We repeat this procedure $500$ times.

We show averages of (approximated) risk and risk estimate of LST, LST-SSP and
LST-AS in Figure \ref{fig:wl-risk-curve-heavisine} for ``heavisine''
and Figure \ref{fig:wl-risk-curve-blocks} for ``blocks'' respectively. We
also show box plots of risk values at the selected number of components
and those of the number of un-removed components in Figure
\ref{fig:wl-box-plot-heavisine} for ``heavisine'' and Figure
\ref{fig:wl-box-plot-blocks} for ``blocks'' respectively.  By Figure
\ref{fig:wl-risk-curve-heavisine} (b) and Figure
\ref{fig:wl-risk-curve-blocks} (b), risk estimate approximates risk well
for both signals even when noise variance is estimated by MAD.  By
Figure \ref{fig:wl-risk-curve-heavisine} and Figure
\ref{fig:wl-risk-curve-blocks}, we can expect that a model estimated by
LST-AS shows a low risk and high sparsity compared to LST and LST-SSP;
i.e. this result leads to the same conclusions as in the previous
numerical example.  By comparing Figure
\ref{fig:wl-risk-curve-heavisine} to Figure
\ref{fig:wl-risk-curve-blocks}, the optimal number of components for
``blocks'' is larger than for ``heavisine'', which is due to a degree of
smoothness of signals.  By Figure \ref{fig:wl-box-plot-heavisine} and
Figure \ref{fig:wl-box-plot-blocks}, for both signals, LST-AS
outperforms the other methods in terms of prediction accuracy and sparsity, in which
especially it shows a nice sparseness property. 
Note that the worse results of LST and UST may be improved
by applying a heuristics that thresholding methods are applied only
to detail coefficients at a determined level while there is no
systematic choice of the appropriate level.

\begin{figure}[p]
\begin{center}
\small
\includegraphics[width=90mm]{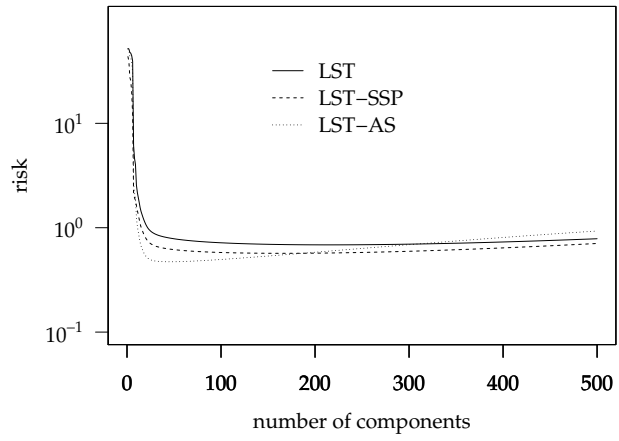} 

(a) Averaged risk curves of LST, LST-SSP and LST-AS.

\vspace{5mm}

\includegraphics[width=90mm]{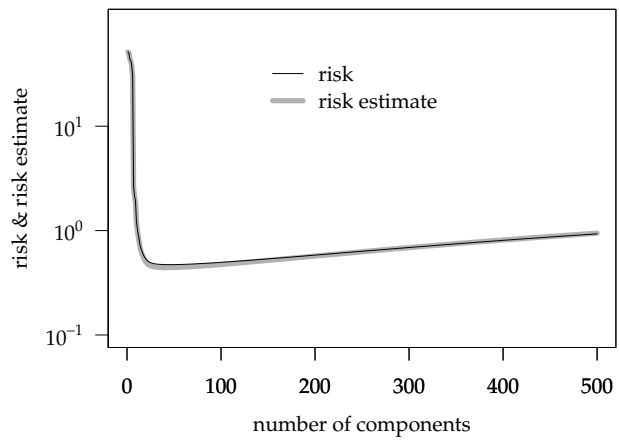} 

(b) Averaged risk and risk estimate of LST-AS.

\end{center}
\caption{Risk curve and risk estimate for ``heavisine''.}
\label{fig:wl-risk-curve-heavisine}
\end{figure}

\begin{figure}[p]
\begin{center}
\small
\includegraphics[width=90mm]{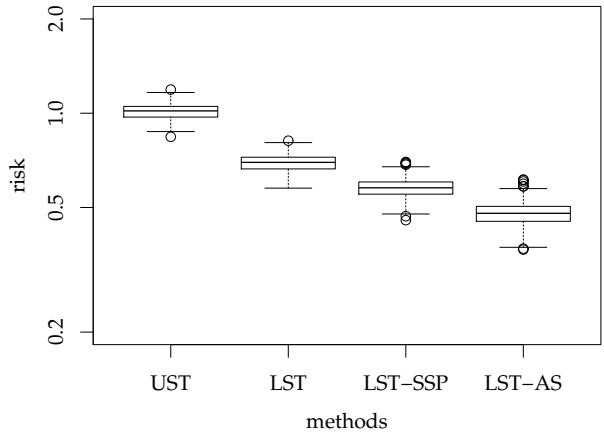} 

(a) Risk value at the selected number of components.

\vspace{5mm}

\includegraphics[width=90mm]{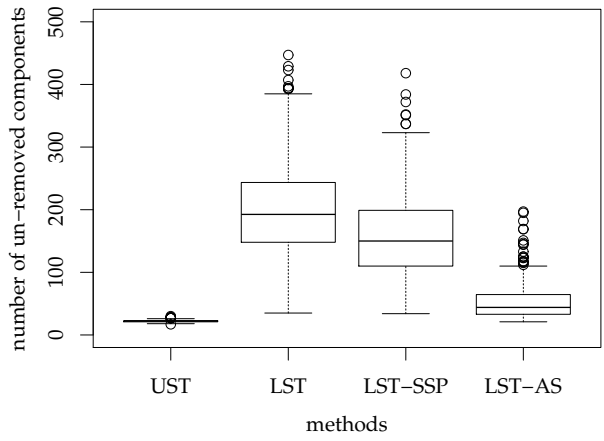} 

(b) The number of un-removed components.

\end{center}
\caption{Risk and the number of un-removed components for ``heavisine''.}
\label{fig:wl-box-plot-heavisine}
\end{figure}

\begin{figure}[p]
\begin{center}
\small
\includegraphics[width=90mm]{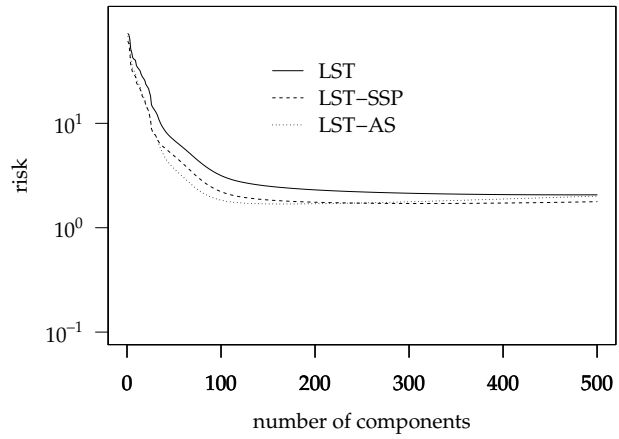} 

(a) Averaged risk curve of LST, LST-SSP and LST-AS.

\vspace{5mm}

\includegraphics[width=90mm]{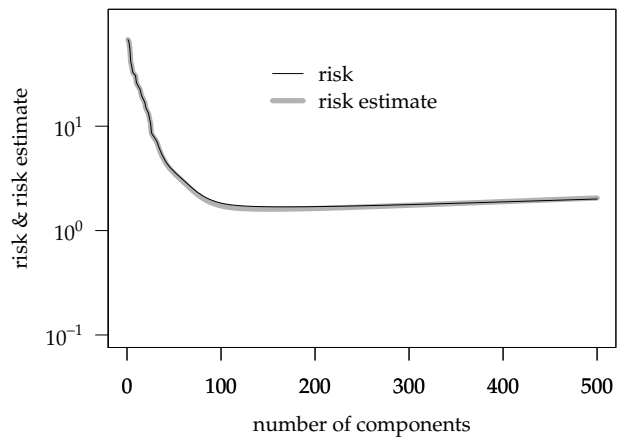} 

(b) Averaged risk curve and risk estimate of LST-AS.

\end{center}
\caption{Risk curve and risk estimate for ``blocks''.}
\label{fig:wl-risk-curve-blocks}
\end{figure}

\begin{figure}[p]
\begin{center}
\small
\includegraphics[width=90mm]{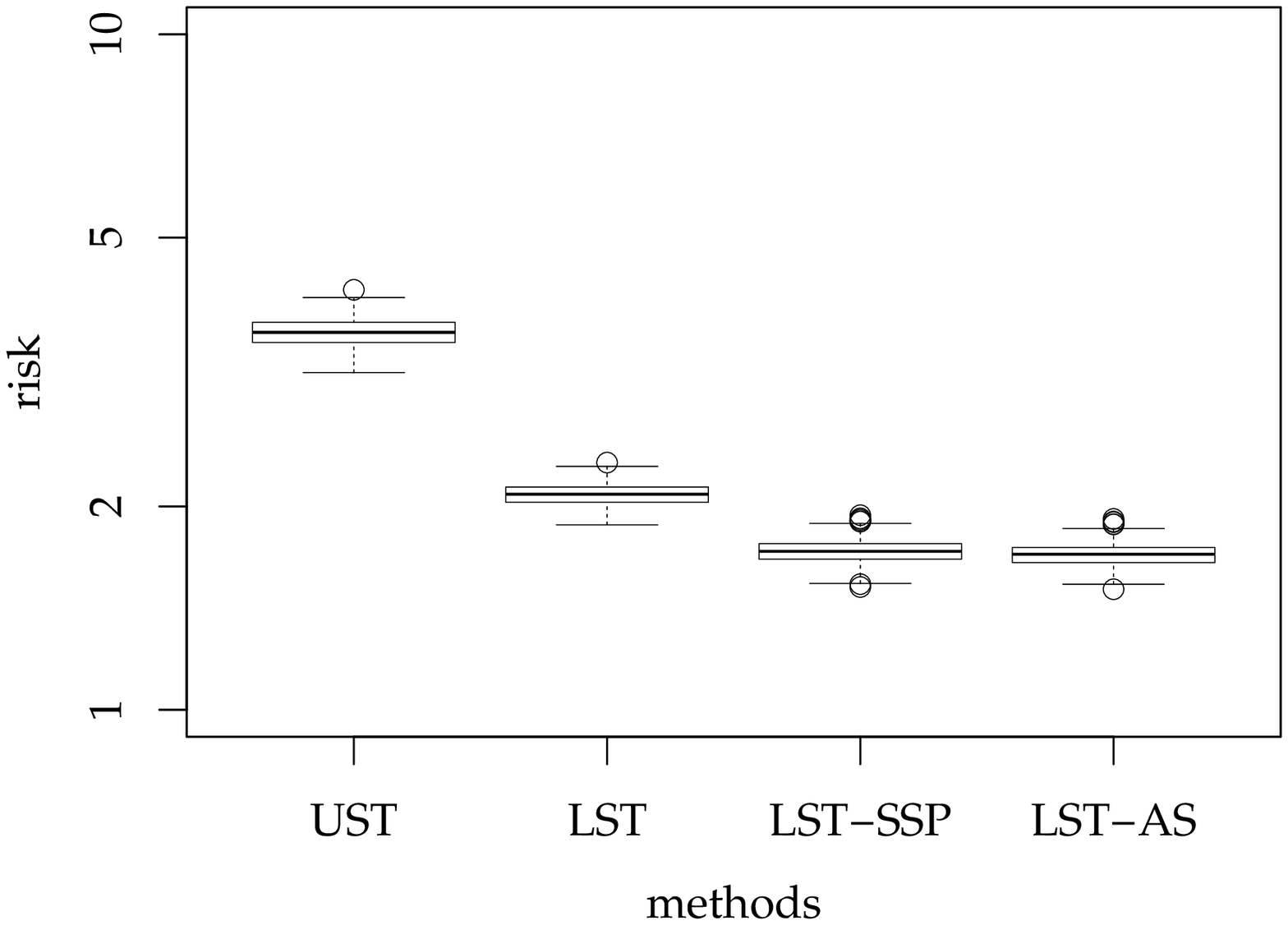} 

(a) Risk value at the selected number of components.

\vspace{5mm}

\includegraphics[width=90mm]{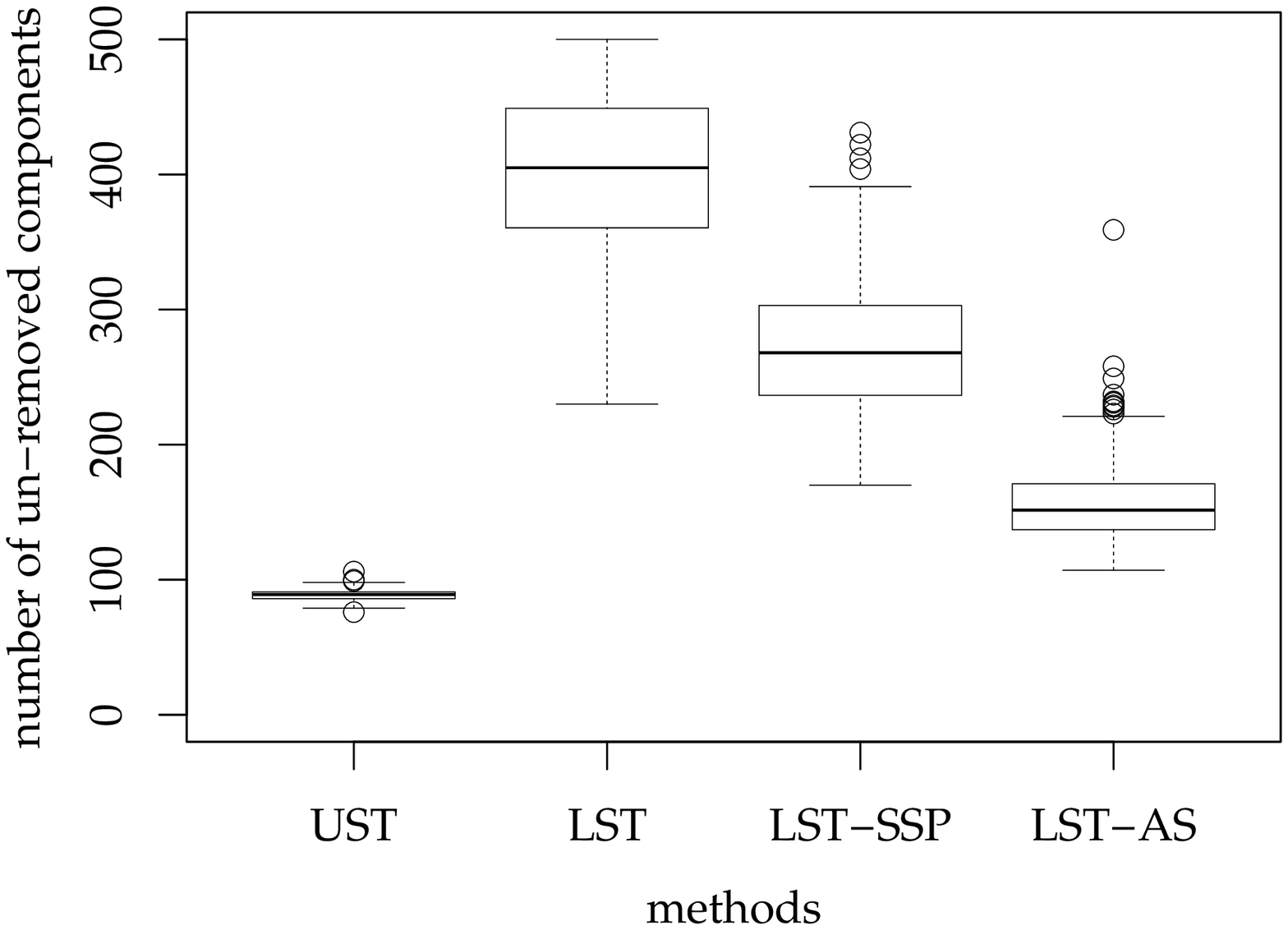} 

(b) The number of un-removed components.

\end{center}
\caption{Risk and the number of un-removed components for ``blocks''.}
\label{fig:wl-box-plot-blocks}
\end{figure}

\section{Conclusions and future works}

Soft-thresholding is a key modeling tool in statistical signal
processing such as wavelet denoising.  It has a parameter that
simultaneously controls threshold level and amount of shrinkage. This
parametrization is possible to suffer from an excess shrinkage for un-removed
valid components at a sparse representation; i.e. there is a dilemma
between prediction accuracy and sparsity.  In this paper, to overcome
this problem, we introduced a component-wise and data-dependent scaling method
for soft-thresholding estimators in a context of non-parametric
orthogonal regression including discrete wavelet transform. We refer
this method as an adaptive scaling method. Here, we employed a
LARS-based soft-thresholding method; i.e. a soft-thresholding method
that is implemented by LARS under an orthogonality condition. In
LARS-based soft-thresholding, a parameter value is selected by a
data-dependent manner by which a model selection problem reduces to the
determination of the number of un-removed components. We firstly derived
a risk given by LAR-based soft-thresholding estimate with our adaptive
scaling.  For determining an optimal number of un-removed components, we
then gave a model selection criterion as an unbiased estimate of the
risk. We also analyzed some properties of the risk curve and found that
the model selection criterion is possible to select a model with low
risk and high sparsity compared to a naive soft-thresholding. This was
verified by a simple numerical experiment and an application to wavelet
denoising.  As a future work, we need more application results. In doing
this, estimate of noise variance should be established in general
applications while MAD was found to be a good choice for a wavelet
denoising application.  Although
we gave scaling values in a top down manner in this paper, we may need
to test the other forms of adaptive scaling values; e.g. scaling values
which are estimates of optimal values in some senses. Moreover,
development of adaptive scaling for non-orthogonal case may be expected
for more general applications.

\appendix

\section{Lemmas}

We here give some lemmas that is used for proving the main theorems.

Let $X_1,\ldots,X_n$ be random variables.  We define the $m$th largest
value among $X_1,\ldots,X_n$ by $X_{(m)}=X_{(m)}(n)$.  

\begin{lemma}
\label{lemma:max-chi2-E}
Let $X_1,\ldots,X_n$ be i.i.d. random variables from $\chi^2_1$. 
We define $t_n=2\log n-\log\log n-\log\pi$. 
Then, at each fixed $k=1,2,\ldots$, 
\begin{align}
\label{eq:lemma-max-chi2-E}
\lim_{n\to\infty}\E\left((X_{(1)}-t_n)/2\right)^k=(-1)^k\Gamma^{(k)}(1)
\end{align}
hold, where $\Gamma^{(k)}(1)$ is the $k$th derivative of the Gamma
function at $1$. 
(\ref{eq:lemma-max-chi2-E}) implies that
\begin{align}
\label{eq:lemma-max-chi2-E-2}
\lim_{n\to\infty}\E\left[X_{(1)}^k/t_n^k\right]=1.
\end{align}
\end{lemma}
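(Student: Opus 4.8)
The plan is to recognize the left-hand side of (\ref{eq:lemma-max-chi2-E}) as the $k$th moment of a normalized maximum and to show that it converges to the corresponding moment of a Gumbel limit law. First I would pin down the tail of $\chi^2_1$. Writing $F$ for its distribution function and $\Phi$ for the standard normal distribution function, $1-F(x)=2(1-\Phi(\sqrt{x}))$, and the Mills-ratio bounds $\frac{z}{z^2+1}\varphi(z)\le 1-\Phi(z)\le\frac{1}{z}\varphi(z)$ give
\begin{equation}
1-F(x)\sim\sqrt{\frac{2}{\pi x}}\,e^{-x/2},\qquad x\to\infty .
\end{equation}
Substituting $x=t_n+2w$ and using $e^{-t_n/2}=\sqrt{\pi\log n}/n$ yields $n(1-F(t_n+2w))\to e^{-w}$ for each fixed $w$, so that
\begin{equation}
\P\!\left(\frac{X_{(1)}-t_n}{2}\le w\right)=F(t_n+2w)^n\longrightarrow\exp\!\left(-e^{-w}\right).
\end{equation}
Hence $W_n:=(X_{(1)}-t_n)/2$ converges in distribution to a random variable $G$ with the standard Gumbel law; the centering $t_n$ and the scale $2$ are exactly the quantities making this work, the $2$ arising because the reciprocal hazard of $\chi^2_1$ tends to $2$.

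Next I would compute the Gumbel moments. With density $g(x)=e^{-x}\exp(-e^{-x})$, the substitution $u=e^{-x}$ gives the Laplace transform $\E[e^{-tG}]=\int_0^\infty u^{t}e^{-u}\,du=\Gamma(1+t)$. Expanding in powers of $t$ identifies $\E[G^k]=(-1)^k\Gamma^{(k)}(1)$, which is precisely the right-hand side of (\ref{eq:lemma-max-chi2-E}). Thus the lemma reduces to upgrading the distributional convergence $W_n\Rightarrow G$ to convergence of $k$th moments, i.e. $\E[W_n^k]\to\E[G^k]$. Since convergence in law does not by itself transmit to moments, the substantive step is to verify uniform integrability of $\{|W_n|^k\}$.

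The main obstacle is therefore the uniform tail control of $W_n$, since all the asymptotics above hold only for each fixed $w$. For the right tail I would use the union bound $\P(W_n>w)\le n(1-F(t_n+2w))$ together with the non-asymptotic estimate $1-F(x)\le\sqrt{2/(\pi x)}\,e^{-x/2}$; the resulting prefactor equals $\sqrt{2\log n/t_n}$, which converges to $1$ and is thus bounded, giving $\P(W_n>w)\le Ce^{-w}$ uniformly in $n$ for all large $n$. For the left tail, $\P(W_n<-w)=F(t_n-2w)^n\le\exp(-n(1-F(t_n-2w)))$ decays like $\exp(-e^{w})$, and vanishes identically once $t_n-2w\le 0$ because $\chi^2_1$ is nonnegative. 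These bounds give $\sup_n\E[|W_n|^{k+1}]<\infty$, hence uniform integrability of $\{|W_n|^k\}$, and therefore $\E[W_n^k]\to(-1)^k\Gamma^{(k)}(1)$, which is (\ref{eq:lemma-max-chi2-E}). Finally (\ref{eq:lemma-max-chi2-E-2}) follows at once: writing $X_{(1)}^k/t_n^k=(1+2W_n/t_n)^k=\sum_{j=0}^k\binom{k}{j}(2/t_n)^j W_n^j$ and taking expectations, every term with $j\ge 1$ carries a factor $t_n^{-j}\to 0$ while $\E[W_n^j]$ stays bounded, so only the $j=0$ term survives and the ratio tends to $1$. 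Throughout, the classical distributional step could alternatively be quoted from the extreme-value references \citep{LLR1983,SIR1987}, but the uniform-integrability estimate is the part that must be carried out by hand.
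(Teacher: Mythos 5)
Your argument is correct and reaches the same two milestones as the paper --- weak convergence of $(X_{(1)}-t_n)/2$ to the double exponential (Gumbel) law, followed by convergence of moments --- but where the paper disposes of both steps by citation (the distributional limit via ``Example 3, pp.~72--73'' of Resnick and the moment convergence via his Proposition 2.1(iii)), you carry them out from scratch: the $\chi^2_1$ tail asymptotics via Mills' ratio, the identification $\E[e^{-tG}]=\Gamma(1+t)$ giving $\E[G^k]=(-1)^k\Gamma^{(k)}(1)$, and an explicit uniform-integrability bound. This buys a self-contained and checkable proof at the cost of length; the paper buys brevity at the cost of asking the reader to verify that the $\chi^2_1$ maximum satisfies the hypotheses of Resnick's moment-convergence proposition, which is exactly the uniform integrability you prove by hand. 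One small gloss in your left-tail estimate: the bound $\P(W_n<-w)\le\exp(-n(1-F(t_n-2w)))\lesssim\exp(-ce^{w})$ with a constant uniform in $n$ relies on the lower Mills bound $1-F(x)\ge c\,x^{1/2}(x+1)^{-1}e^{-x/2}$, whose prefactor degenerates as $x=t_n-2w\to 0^+$, so the regime $w$ close to $t_n/2$ is not literally covered by that display. It is easily patched --- for $t_n-2w\le\delta$ one has $F(t_n-2w)^n\le F(\delta)^n$, which is geometrically small while the polynomial weight is only $O((\log n)^k)$ --- but as written the uniform claim skips this case. Your derivation of the second limit by expanding $(1+2W_n/t_n)^k$ and using boundedness of the lower moments is fine.
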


\begin{proof}{}
By slightly modifying Example 3, pp.72-73 in
\citep{SIR1987}, we can show that $\P\left\{
(X_{(1)}-t_n)/2\le x\right\}$ converges to the double
exponential distribution. Then, (\ref{eq:lemma-max-chi2-E}) is a direct
conclusion of Proposition 2.1 (iii) in \citep{SIR1987}. 
\end{proof}

\begin{lemma}
\label{lemma:prob-bound-mth-largest-chi2} 
Let $X_1,\ldots,X_n$ be i.i.d. random variables from $\chi^2_1$. 
At each fixed $m$, 
\begin{align}
\label{eq:prob-bound-mth-largest-chi2-lower} 
\lim_{n\to\infty}\P\left[X_{(m)}\le 2(1-\delta)\log n)\right]&=0\\
\label{eq:prob-bound-mth-largest-chi2-upper} 
\lim_{n\to\infty}\P\left[X_{(m)}>2\log n)\right]&=0
\end{align}
hold, where $\delta$ is an arbitrary positive constant.
\end{lemma}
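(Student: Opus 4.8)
The plan is to translate both limits into tail bounds for the number of exceedances of a deterministic level, which converts the order statistic into a monotone functional of i.i.d.\ indicators. Set $p_n(t)=\P[X_1>t]$ and let $N_n(t)$ denote the number of indices $i$ with $X_i>t$; since the $X_i$ are i.i.d., $N_n(t)$ is $\mathrm{Binomial}(n,p_n(t))$. Because $X_{(m)}$ is the $m$th largest value, the relevant events are exactly
\begin{equation*}
\{X_{(m)}\le t\}=\{N_n(t)\le m-1\},\qquad \{X_{(m)}>t\}=\{N_n(t)\ge m\},
\end{equation*}
so (\ref{eq:prob-bound-mth-largest-chi2-lower}) and (\ref{eq:prob-bound-mth-largest-chi2-upper}) become, respectively, a left-tail and a right-tail estimate for a binomial whose success probability I read off from the Gaussian tail.

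To estimate $p_n(t)$ I would write $X_1=Z^2$ with $Z\sim N(0,1)$, so that $p_n(t)=2(1-\Phi(\sqrt t))$, and invoke the standard Mills-ratio bounds
\begin{equation*}
\frac{\sqrt t}{1+t}\sqrt{\frac{2}{\pi}}\,e^{-t/2}\le p_n(t)\le\sqrt{\frac{2}{\pi t}}\,e^{-t/2}.
\end{equation*}
Evaluating at the two levels gives the decisive asymptotics: for the upper level $t=2\log n$ one finds $n\,p_n(2\log n)\le (\pi\log n)^{-1/2}\to0$, whereas for the lower level $t_n=2(1-\delta)\log n$ the lower Mills bound yields $n\,p_n(t_n)\ge c\,n^{\delta}/\sqrt{\log n}\to\infty$ for some $c>0$ and every $\delta>0$. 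It is precisely the exponent $\delta>0$ that forces the mean number of exceedances to diverge in the lower regime.

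With these two facts the statements follow from elementary binomial estimates. For (\ref{eq:prob-bound-mth-largest-chi2-upper}), since $m\ge1$ the union bound gives $\P[N_n(2\log n)\ge m]\le\P[N_n(2\log n)\ge1]\le\E\,N_n(2\log n)=n\,p_n(2\log n)\to0$. For (\ref{eq:prob-bound-mth-largest-chi2-lower}), writing $\mu_n=n\,p_n(t_n)\to\infty$ and using $\binom{n}{j}\le n^j/j!$ together with $(1-p_n)^{n-j}\le e^{-(n-j)p_n}$, each of the finitely many terms in the left tail satisfies
\begin{equation*}
\binom{n}{j}p_n(t_n)^j\bigl(1-p_n(t_n)\bigr)^{n-j}\le\frac{\mu_n^{\,j}}{j!}\,e^{jp_n(t_n)}\,e^{-\mu_n},\qquad j=0,\dots,m-1.
\end{equation*}
Because $p_n(t_n)\to0$ and $j$ is fixed, $e^{jp_n(t_n)}$ stays bounded, while $\mu_n^{\,j}e^{-\mu_n}\to0$ as $\mu_n\to\infty$; summing the finitely many terms gives $\P[N_n(t_n)\le m-1]\to0$, which is (\ref{eq:prob-bound-mth-largest-chi2-lower}).

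The main obstacle is the lower direction: one must confirm that the $1/\sqrt{\log n}$ correction in the Mills bound does not cancel the divergence $\mu_n\to\infty$, and then control the binomial left tail, for which the bound $x^{j}e^{-x}\to0$ as $x\to\infty$ is the crucial elementary input. The upper direction is comparatively immediate from the union bound; for the maximum ($m=1$) it is moreover already implied by the extreme-value behaviour established in Lemma \ref{lemma:max-chi2-E}, but the exceedance-counting argument has the advantage of handling every fixed $m$ at once.
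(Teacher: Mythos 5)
Your proof is correct. The quantitative core is the same as the paper's: both arguments hinge on showing that the expected number of exceedances $n\,p_n(t)$ tends to $0$ at the level $t=2\log n$ and to $\infty$ at the level $t=2(1-\delta)\log n$, and the upper-tail statement is handled identically in both (union bound, then monotonicity $X_{(m)}\le X_{(1)}$, which in your formulation is $\{N_n\ge m\}\subseteq\{N_n\ge 1\}$). Where you diverge is in two places. First, for the tail of $\chi^2_1$ the paper derives the asymptotic equivalence $\P[X>x]\sim 2f_1(x)$ via L'Hospital's rule, while you use two-sided Mills-ratio bounds after writing $X=Z^2$; these give the same exponents and the same $1/\sqrt{\log n}$ corrections, so nothing is lost. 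Second, and more substantively, for the lower-tail limit the paper simply cites Theorem 2.2.1 of Leadbetter et al.\ (1983) once $n(1-F_1(u_n))\to\infty$ is established, whereas you prove the needed implication from scratch: you identify $\{X_{(m)}\le t\}$ with $\{N_n(t)\le m-1\}$ for the binomial exceedance count and bound each of the $m$ left-tail terms by $\mu_n^j e^{jp_n}e^{-\mu_n}/j!\to 0$. Your route is more elementary and self-contained, and it transparently covers every fixed $m$ (the cited EVT theorem in its basic form concerns the maximum, so your explicit exceedance count actually closes a small gap in the paper's citation); the paper's route is shorter at the cost of an external reference. Both are valid.
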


\begin{proof}{}
We denote the probability distribution function of $\chi^2_1$ by $F_1$.
The probability density function of $\chi^2_1$ is
given by $f_1(x)=x^{-1/2}e^{-x/2}/\sqrt{2\pi}$. We have $\frac{d
f_1(x)}{dx}=-(1/x+1)f_1(x)/2$. Thus, we have
$\left(\int_x^{\infty}f_1(t)dt\right)/(2f_1(x))\to 1$ as $x\to\infty$ by
applying $\int_0^{\infty}f_1(t)dt=1$ and L'Hospital's rule. Therefore,
 for a $\chi_1^2$ random variable $X$, 
\begin{equation}
\label{eq:lemma-chi2-P-bound-1}
\P[X>x]\sim 2f_1(x) 
\end{equation}
holds for a sufficiently large $x$. 

By (\ref{eq:lemma-chi2-P-bound-1}), we obtain
\begin{align}
\P\left[X_{(1)}>2\log n\right]
&\le\sum_{i=1}^n\P\left[X_i>2\log n\right]\notag\\
&\sim 2nf_1\left(2\log n\right)\notag\\
&=\frac{1}{\pi}\frac{1}{\sqrt{\log n}}\to 0~(n\to\infty).
\end{align}
Since $X_{(1)}\ge X_{(m)}$ for any $m$, we have
(\ref{eq:prob-bound-mth-largest-chi2-upper}).
On the other hand, by (\ref{eq:lemma-chi2-P-bound-1}), we have
\begin{align}
n(1-F_1(2(1-\delta)\log n)))
&\sim 2nf_1(2(1-\delta)\log n)\notag\\
&\sim \frac{1}{\pi(1+\delta)}\frac{n^{\delta}}{\sqrt{\log n}}
\end{align}
for a sufficiently large $n$. Since this goes to $\infty$, we obtain
(\ref{eq:prob-bound-mth-largest-chi2-lower}) by Theorem 2.2.1 in \cite{LLR1983}.
\end{proof}

\begin{lemma}
\label{lemma:cj>maxci}
For any $j\in K^*$ and any $\rho>0$, 
\begin{equation}
\label{eq:cj>maxci}
\P\left[\tc_j^2\le\max_{i\in\oK^*}\tc_i^2\right]\le 2\pi^{-1/2}\rho^{-1/2}n^{-\rho} 
\end{equation}
holds for a sufficiently large $n$.
\end{lemma}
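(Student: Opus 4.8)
The plan is to exploit the separation of scales between the two sides of the event $F_j=\{\tc_j^2\le\max_{i\in\oK^*}\tc_i^2\}$ whose probability must be bounded. For $j\in K^*$ we have $\tc_j\sim N(c_j,1)$ with $c_j=\sqrt n\,\beta_j/\sigma$, so $\tc_j^2$ concentrates around $c_j^2=n\beta_j^2/\sigma^2$, which grows \emph{linearly} in $n$. For $i\in\oK^*$ we have $\beta_i=0$, hence the $\tc_i$ are i.i.d.\ $N(0,1)$ and the $\tc_i^2$ are i.i.d.\ $\chi^2_1$, whose maximum is of order $2\log n$ by Lemma~\ref{lemma:prob-bound-mth-largest-chi2}. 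Since a logarithm is eventually dwarfed by a linear term, $F_j$ is rare, and the whole task is to quantify its rate.

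First I would insert the intermediate level $x_n=2(1+\rho)\log n$ and use the set inclusion
\begin{equation}
\{\tc_j^2\le\max_{i\in\oK^*}\tc_i^2\}\subseteq\{\tc_j^2\le x_n\}\cup\Big\{\max_{i\in\oK^*}\tc_i^2> x_n\Big\},
\end{equation}
valid because if $\tc_j^2$ exceeds $x_n$ yet is still dominated by the maximum then the maximum exceeds $x_n$ as well. This reduces the problem to two separate tail estimates, and $x_n$ is chosen precisely so that the second one produces the advertised $n^{-\rho}$ rate.

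For the max term I would apply a union bound over the $|\oK^*|=n-k^*\le n$ null indices together with the Gaussian tail inequality $\P[Z>t]\le t^{-1}(2\pi)^{-1/2}e^{-t^2/2}$, which yields $\P[\chi^2_1> x]\le\sqrt{2/\pi}\,x^{-1/2}e^{-x/2}$ (this is the inequality underlying the asymptotic (\ref{eq:lemma-chi2-P-bound-1})). Since $x_n=2(1+\rho)\log n$ gives $e^{-x_n/2}=n^{-(1+\rho)}$, the factor $n$ from the union bound cancels one power and
\begin{equation}
\P\Big[\max_{i\in\oK^*}\tc_i^2> x_n\Big]\le n\sqrt{2/\pi}\,x_n^{-1/2}n^{-(1+\rho)}=\pi^{-1/2}(1+\rho)^{-1/2}(\log n)^{-1/2}n^{-\rho}.
\end{equation}
Bounding $(1+\rho)^{-1/2}\le\rho^{-1/2}$ and $(\log n)^{-1/2}\le 1$ (for $n\ge 3$) leaves $\pi^{-1/2}\rho^{-1/2}n^{-\rho}$.

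For the first term, write $\tc_j-c_j\sim N(0,1)$; since $|c_j|-\sqrt{x_n}\to\infty$ (linear versus logarithmic growth), the Gaussian tail gives $\P[\tc_j^2\le x_n]=\P[|\tc_j|\le\sqrt{x_n}]\le\exp\!\big(-(|c_j|-\sqrt{x_n})^2/2\big)$, which is of order $e^{-n\beta_j^2/(2\sigma^2)}$ and therefore $o(n^{-\rho})$; for large $n$ it is at most $\pi^{-1/2}\rho^{-1/2}n^{-\rho}$. Adding the two pieces gives the claimed $2\pi^{-1/2}\rho^{-1/2}n^{-\rho}$. I do not expect a genuine obstacle here — the argument is a routine two-sided tail bound — so the only delicate point is the constant bookkeeping: the choice $x_n=2(1+\rho)\log n$ is what makes the union-bound factor $n$ cancel against $e^{-x_n/2}$ while leaving exactly $n^{-\rho}$, and the slack in $(1+\rho)^{-1/2}(\log n)^{-1/2}\le\rho^{-1/2}$ is what absorbs the (super-exponentially small) first term into the factor of two.
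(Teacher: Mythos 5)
Your proposal is correct and follows essentially the same route as the paper: the same threshold $2(1+\rho)\log n$ (the paper's $\tau_{n,\rho}$), the same two-event decomposition, and the same union bound with the $\chi^2_1$ tail for the maximum over $\oK^*$. The only cosmetic difference is in the first term, where the paper deliberately loosens $\P[|\oc_j|\ge\sqrt{n}|\beta_j|/\sigma-\sqrt{\tau_{n,\rho}}]$ to $\P[\oc_j^2\ge\tau_{n,\rho}]\le\pi^{-1/2}\rho^{-1/2}n^{-\rho}$ rather than keeping your sharper $e^{-cn}$ bound; both suffice.
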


\begin{proof}{}
We define $\tau_{n,\rho}=2(\rho+1)\log n$. We obtain
\begin{align}
\label{eq:cj>maxci-0}
\P\left[\tc_j^2>\max_{i\in \oK^*}\tc_i^2\right]
&\ge\P\left[\left[\tc_j^2>\tau_{n,\rho}\right]
\bigcap\left[\tau_{n,\rho}>\max_{i\in \oK^*}\tc_i^2\right]\right]\notag\\
&=1-\P\left[\left[\tc_j^2\le\tau_{n,\rho}\right]
\bigcup\left[\tau_{n,\rho}\le\max_{i\in \oK^*}\tc_i^2\right]\right]\notag\\
&\ge 1-\P\left[\tc_j^2\le\tau_{n,\rho}\right]-
\P\left[\max_{i\in \oK^*}\tc_i^2\ge \tau_{n,\rho}\right].
\end{align}

By the definition of $\oc_j$, we have
\begin{align}
\label{eq:cj>maxci-1}
\P\left[\tc_j^2\le\tau_{n,\rho}\right]&=\P\left[|\tc_j|\le\sqrt{\tau_{n,\rho}}\right]\notag\\
&=\P\left[|\sqrt{n}\beta_j/\sigma+\oc_j|\le\sqrt{\tau_{n,\rho}}\right]\notag\\
&\le\P\left[\sqrt{n}|\beta_j|/\sigma-|\oc_j|\le \sqrt{\tau_{n,\rho}}\right]\notag\\
&=\P\left[|\oc_j|\ge\sqrt{n}|\beta_j|/\sigma-\sqrt{\tau_{n,\rho}}\right]\notag\\
&\le\P\left[|\oc_j|\ge\sqrt{\tau_{n,\rho}}\right]\notag\\
&=\P\left[\oc_j^2\ge\tau_{n,\rho}\right]
\end{align}
for a sufficiently large $n$. Note that this evaluation is not tight but
is enough in this paper. Since $\oc_j^2\sim\chi^2_1$ by the
definition of $\oc_j$, by (\ref{eq:lemma-chi2-P-bound-1}) and
(\ref{eq:cj>maxci-1}), we have
\begin{equation}
\label{eq:cj>maxci-2}
\P\left[\tc_j^2\le\tau_{n,\rho}\right]\le
\pi^{-1/2}\rho^{-1/2}n^{-\rho}
\end{equation}
for a sufficiently large $n$.

On the other hand, $\tc_i^2\sim\chi^2_1$ holds for $i\in \oK^*$ since
$\beta_i=0$ holds for $i\in \oK^*$. 
By (\ref{eq:lemma-chi2-P-bound-1}), we thus have
\begin{align}
\label{eq:cj>maxci-3}
\P[\max_{i\in \oK^*}\tc_i^2\ge\tau_{n,\rho}]
&\le\sum_{j\in\oK^*}\P[\tc_i^2\ge\tau_{n,\rho}]\notag\\
&\sim (n-k^*)\pi^{-1/2}(\rho+1)^{-1/2}n^{-(\rho+1)}\notag\\
&\le\pi^{-1/2}\rho^{-1/2}n^{-\rho}
\end{align}
for a sufficiently large $n$.
By (\ref{eq:cj>maxci-0}), (\ref{eq:cj>maxci-2}) and
 (\ref{eq:cj>maxci-3}), we obtain (\ref{eq:cj>maxci}) as desired.
\end{proof}

\begin{lemma}
\label{lemma:P-oEn*-bound}
\begin{equation}
\label{eq:P-oEn*-bound}
\P[\oE_n^*]\le k^*\pi^{-1/2}\rho^{-1/2}n^{-\rho}
\end{equation}
holds for any $\rho>0$ and a sufficiently large $n$.
\end{lemma}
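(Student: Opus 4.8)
The plan is to recognize $\oE_n^*$ as a union of the events $F_j$ introduced before the theorems, after which the estimate reduces to a union bound over Lemma \ref{lemma:cj>maxci}. Concretely, the engine of the proof is the deterministic identity
\begin{equation}
\oE_n^* = \bigcup_{j\in K^*}F_j,
\end{equation}
valid with probability one (no ties occur, by (\ref{eq:dist-of-ecv})). This translates the statement about the ranking $|\tc_{p_1}|\ge\cdots\ge|\tc_{p_n}|$ into the much more tractable statement that some true coordinate is dominated in absolute value by the largest non-true coordinate.

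First I would prove the identity. Since $|K^*|=k^*$, the event $E_n^*=\bigcap_{l=1}^{k^*}\{p_l\in K^*\}$ says that the $k^*$ largest absolute least-squares estimators are indexed exactly by $K^*$, which (ties excluded) is equivalent to $\min_{j\in K^*}|\tc_j|>\max_{i\in\oK^*}|\tc_i|$. Indeed, if every true coordinate exceeds every non-true one in absolute value then the top $k^*$ ranks are filled precisely by $K^*$; conversely, if $E_n^*$ holds then $\min_{j\in K^*}|\tc_j|\ge|\tc_{p_{k^*}}|>|\tc_{p_{k^*+1}}|\ge\max_{i\in\oK^*}|\tc_i|$. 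Passing to complements, $\oE_n^*$ holds iff $\tc_j^2\le\max_{i\in\oK^*}\tc_i^2$ for at least one $j\in K^*$, which is exactly $\bigcup_{j\in K^*}F_j$. I would spell out both inclusions by comparing the rank of a putatively dominated true coordinate against the ordered values $|\tc_{p_{k^*}}|$ and $|\tc_{p_{k^*+1}}|$.

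With the identity in hand, the union bound gives $\P[\oE_n^*]\le\sum_{j\in K^*}\P[F_j]$, and each summand is controlled by Lemma \ref{lemma:cj>maxci}. Applying that lemma directly with exponent $\rho$ produces $\P[\oE_n^*]\le 2k^*\pi^{-1/2}\rho^{-1/2}n^{-\rho}$ for large $n$; to obtain the sharper constant in (\ref{eq:P-oEn*-bound}) I would instead invoke Lemma \ref{lemma:cj>maxci} with a marginally larger exponent $\rho'>\rho$, since then $2k^*\pi^{-1/2}(\rho')^{-1/2}n^{-\rho'}\le k^*\pi^{-1/2}\rho^{-1/2}n^{-\rho}$ once $n$ is large enough (the ratio $2(\rho/\rho')^{1/2}n^{-(\rho'-\rho)}$ of the two bounds tends to zero). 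I do not anticipate a real obstacle: the only substantive step is the set identity, whose verification is a careful but routine order-statistics argument made painless by the absence of ties, and everything downstream is a union bound feeding into the already-proved Lemma \ref{lemma:cj>maxci}.
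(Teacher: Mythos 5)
Your proof is correct and follows essentially the same route as the paper's: the inclusion $\oE_n^*\subseteq\bigcup_{j\in K^*}F_j$ followed by a union bound over Lemma \ref{lemma:cj>maxci} (the paper only proves and needs the one inclusion, not the full identity, but your extra direction is harmless). You are in fact more careful than the paper, which silently drops the factor $2$ coming from the constant $2\pi^{-1/2}\rho^{-1/2}$ in Lemma \ref{lemma:cj>maxci}; your device of invoking that lemma with $\rho'>\rho$ is a clean way to recover the stated constant.
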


\begin{proof}{}
If $E_n^*$ does not occur then there exist $l\in\{1,\ldots,k^*\}$ such
that $p_l\notin K^*$. This implies that there exist $j\in K^*$ and
$i\in\oK^*$ that satisfy $\tc_i^2\ge\tc_j^2$. Therefore, we have
$\oE_n^*\subseteq \bigcup_{j\in K^*}F_j$.  By Lemma
\ref{lemma:cj>maxci}, we then obtain (\ref{eq:P-oEn*-bound}).
\end{proof}

\begin{lemma}
\label{lemma:E-cp1^2-IoE*}
\begin{equation}
\label{eq:E-cp1^2-IoE*}
\lim_{n\to\infty}\E\tc_{p_1}^{2m}I_{\oE_n^*}=0
\end{equation}
holds for a fixed $m=1,2,\cdots$.
\end{lemma}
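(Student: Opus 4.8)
The plan is to exploit the fact, established in Lemma \ref{lemma:P-oEn*-bound}, that $\P[\oE_n^*]$ decays faster than any fixed negative power of $n$, and to separate this small probability from the (merely polynomially large) factor $\tc_{p_1}^{2m}$ by the Cauchy--Schwarz inequality. Concretely, I would first write
\[
\E\tc_{p_1}^{2m}I_{\oE_n^*}\le\left(\E\tc_{p_1}^{4m}\right)^{1/2}\left(\P[\oE_n^*]\right)^{1/2},
\]
so that it suffices to bound $\E\tc_{p_1}^{4m}$ by a fixed power of $n$ and then let the super-polynomial smallness of $\P[\oE_n^*]$ dominate.

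Next I would control $\E\tc_{p_1}^{4m}$. Since $\tc_{p_1}^2=\max_{1\le j\le n}\tc_j^2$ by the definition of $p_1$, we have $\tc_{p_1}^{4m}=\max_j\tc_j^{4m}\le\sum_{j=1}^n\tc_j^{4m}$, and hence $\E\tc_{p_1}^{4m}\le\sum_{j=1}^n\E\tc_j^{4m}$. For $j\in\oK^*$ we have $\beta_j=0$, so $\tc_j\sim N(0,1)$ and $\E\tc_j^{4m}=(4m-1)!!$ is a constant; these $n-k^*$ terms contribute $O(n)$. For $j\in K^*$ we have $\tc_j\sim N(\sqrt n\beta_j/\sigma,1)$, and the $4m$th moment of a $N(\mu,1)$ variable is a polynomial in $\mu$ of degree $4m$; with $\mu=\sqrt n\beta_j/\sigma$ this is $O(n^{2m})$, and the $k^*$ such terms contribute $O(n^{2m})$. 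Therefore $\E\tc_{p_1}^{4m}=O(n^{2m})$.

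Combining the two bounds with Lemma \ref{lemma:P-oEn*-bound}, for any $\rho>0$ and sufficiently large $n$,
\[
\E\tc_{p_1}^{2m}I_{\oE_n^*}\le O(n^{m})\cdot\left(k^*\pi^{-1/2}\rho^{-1/2}n^{-\rho}\right)^{1/2}=O\!\left(n^{m-\rho/2}\right).
\]
Since $\rho$ is a free parameter in Lemma \ref{lemma:P-oEn*-bound}, I would simply choose $\rho>2m$, which makes the exponent $m-\rho/2$ negative and forces the right-hand side to $0$ as $n\to\infty$, giving (\ref{eq:E-cp1^2-IoE*}).

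The only real obstacle is the polynomial growth of $\E\tc_{p_1}^{4m}$: because the true components have means of order $\sqrt n$, the maximal coordinate $\tc_{p_1}^2$ is typically of order $n$, so $\tc_{p_1}^{2m}$ is of order $n^m$ and cannot be bounded uniformly in $n$. The point of the argument is that this polynomial factor is harmless once it is paired, through Cauchy--Schwarz, with the super-polynomially small probability $\P[\oE_n^*]$; the freedom to take $\rho$ arbitrarily large in Lemma \ref{lemma:P-oEn*-bound} is exactly what lets the negative power win.
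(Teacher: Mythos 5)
Your argument is correct, and it takes a genuinely more elementary route than the paper's. The paper first decomposes $\tc_{p_1}$ into its noise part and its mean part, splits on the event that the maximal noise $\oc$ exceeds $(\obeta/\sigma)\sqrt{n}$, and then controls the noise maximum through the extreme-value result of Lemma \ref{lemma:max-chi2-E} (giving $\E[(\oc^2)^{2m}]=O((\log n)^{2m})$) before applying Cauchy--Schwarz to that piece alone and the raw bound $\P[\oE_n^*]=O(n^{-\rho})$ with $\rho>m$ to the deterministic piece. You instead apply Cauchy--Schwarz once to the whole expression and absorb everything into the crude union bound $\E\tc_{p_1}^{4m}\le\sum_{j=1}^n\E\tc_j^{4m}=O(n^{2m})$, which requires no extreme-value theory at all; the price is that you need $\rho>2m$ rather than $\rho>m$, but since $\rho$ is a free parameter in Lemma \ref{lemma:P-oEn*-bound} this costs nothing. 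Your moment computations are sound: the $4m$th moment of $N(0,1)$ is the constant $(4m-1)!!$ for the $n-k^*$ null coordinates, and for $j\in K^*$ the $4m$th moment of $N(\sqrt{n}\beta_j/\sigma,1)$ is a degree-$4m$ polynomial in the mean, hence $O(n^{2m})$. The paper's sharper treatment of the noise maximum is not needed for this lemma (the super-polynomial decay of $\P[\oE_n^*]$ dominates any fixed polynomial), though Lemma \ref{lemma:max-chi2-E} is still used elsewhere (e.g.\ in Lemma \ref{lemma:E-ttheta_k^2-bound}), so your simplification does not remove it from the paper's toolkit.
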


\begin{proof}{}
We define $\oc=\max_{1\le i\le n}|\oc_i|$ and $\obeta=\max_{j\in
 K^*}|\beta_j|$. We also define an event
 $F=\left\{\oc>(\obeta/\sigma)\sqrt{n}\right\}$.
By the Cauchy-Schwarz inequality, we have
\begin{align}
\label{eq:E-cp1^2-IoE*-2}
&\E[\tc_{p_1}^{2m}I_{\oE_n^*}]\notag\\
&\le\E[(\oc_{p_1}+(\beta_{p_1}/\sigma)\sqrt{n})^{2m}I_{\oE_n^*}]\notag\\
&\le\E[(\oc+(\obeta/\sigma)\sqrt{n})^{2m}I_{\oE_n^*}]\notag\\
&\le\E[(\oc+(\obeta/\sigma)\sqrt{n})^{2m}I_FI_{\oE_n^*}]
+\E[(\oc+(\obeta/\sigma)\sqrt{n})^{2m}I_{\oF}I_{\oE_n^*}]\notag\\
&\le 2^{2m}\E[\oc^{2m}I_FI_{\oE_n^*}]
+2^{2m}(\obeta/\sigma)^{2m}n^m\E[I_{\oF}I_{\oE_n^*}]\notag\\
&\le 2^{2m}\E[(\oc^2)^{m}I_{\oE_n^*}]
+2^{2m}(\obeta/\sigma)^{2m}n^m\E[I_{\oE_n^*}]\notag\\
&\le 2^{2m}\sqrt{\E[(\oc^2)^{2m}]}\sqrt{\P[\oE_n^*]}
+2^{2m}(\obeta/\sigma)^{2m}n^m\P[\oE_n^*].
\end{align}
By Lemma \ref{lemma:P-oEn*-bound} with $\rho>m$, the second term of
(\ref{eq:E-cp1^2-IoE*-2}) goes to zero as $n\to\infty$. Since $\oc^2$
is the largest value among i.i.d. $\chi^2_1$ sequence with size $n$, 
the first term of (\ref{eq:E-cp1^2-IoE*-2}) goes to zero as $n\to\infty$
 by Lemma \ref{lemma:max-chi2-E} and Lemma \ref{lemma:P-oEn*-bound} with
 the above choice of $\rho$.
\end{proof}

\begin{lemma}
\label{lemma:E-ttheta_k^2-bound} 
If $k\ge k^*$ then 
\begin{equation}
\label{eq:ttheta_k^2-upper-bound}
\E[\ttheta_k^{2m}]\le (2\log n)^m
\end{equation}
holds for a fixed $m=1,2,\cdots$ and sufficiently large $n$.
\end{lemma}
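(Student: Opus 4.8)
The plan is to reduce the claim to a statement about the maximum of i.i.d.\ $\chi^2_1$ variables and then invoke Lemma~\ref{lemma:max-chi2-E}. First I would exploit the monotonicity of the order statistics. Since $|\tc_{p_1}|\ge|\tc_{p_2}|\ge\cdots$, for every $k\ge k^*$ we have $\ttheta_k=|\tc_{p_{k+1}}|\le|\tc_{p_{k^*+1}}|=\ttheta_{k^*}$, so it suffices to bound $\E[\ttheta_{k^*}^{2m}]=\E[(\tc_{p_{k^*+1}}^2)^m]$. Next, a pigeonhole argument removes the dependence on the true components: among the $k^*+1$ indices $p_1,\dots,p_{k^*+1}$, at least one lies in $\oK^*$ because $|K^*|=k^*$; calling it $p_l$ with $l\le k^*+1$, we get $\tc_{p_{k^*+1}}^2\le\tc_{p_l}^2\le\max_{i\in\oK^*}\tc_i^2$. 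Hence, unconditionally,
\begin{equation}
\E[\ttheta_k^{2m}]\le\E\Big[\big(\max_{i\in\oK^*}\tc_i^2\big)^m\Big].
\end{equation}
Because $\beta_i=0$ for $i\in\oK^*$, the family $\{\tc_i^2:i\in\oK^*\}$ consists of $N:=n-k^*$ i.i.d.\ $\chi^2_1$ variables, so the right-hand side is exactly the $m$-th moment of the maximum $X_{(1)}$ of $N$ i.i.d.\ $\chi^2_1$ variables.

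I would then control this moment with Lemma~\ref{lemma:max-chi2-E}. Writing $t_N=2\log N-\log\log N-\log\pi$ and expanding binomially,
\begin{equation}
\E[X_{(1)}^m]=\sum_{j=0}^m\binom{m}{j}t_N^{\,m-j}\,\E\big[(X_{(1)}-t_N)^j\big],
\end{equation}
each factor $\E[(X_{(1)}-t_N)^j]=2^j\,\E[((X_{(1)}-t_N)/2)^j]$ converges to the constant $2^j(-1)^j\Gamma^{(j)}(1)$ by Lemma~\ref{lemma:max-chi2-E} and is therefore bounded uniformly in $N$. Consequently $\E[X_{(1)}^m]=t_N^m+O(t_N^{\,m-1})=t_N^m+O\big((\log n)^{m-1}\big)$.

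The step I expect to be delicate is turning this asymptotic equality into the clean inequality $\le(2\log n)^m$; the room comes entirely from the $\log\log N$ term in the centering $t_N$. Since $N\le n$ gives $t_N\le 2\log n-\log\log(n-k^*)-\log\pi$, the gap $2\log n-t_N\ge\log\log(n-k^*)+\log\pi\to\infty$. Using the elementary convexity bound $a^m-b^m\ge m\,b^{m-1}(a-b)$ with $a=2\log n$ and $b=t_N$,
\begin{equation}
(2\log n)^m-t_N^m\ge m\,t_N^{\,m-1}\big(\log\log(n-k^*)+\log\pi\big),
\end{equation}
which grows like $(\log n)^{m-1}\log\log n$ and hence dominates the $O((\log n)^{m-1})$ correction from the previous paragraph. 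Therefore, for all sufficiently large $n$, $\E[X_{(1)}^m]=t_N^m+O((\log n)^{m-1})\le(2\log n)^m$, and combined with the first display this yields $\E[\ttheta_k^{2m}]\le(2\log n)^m$, as claimed. The only care required is that the $O$-constants are uniform in $n$ for the fixed $m$, which is exactly what the convergence of the centered moments in Lemma~\ref{lemma:max-chi2-E} provides.
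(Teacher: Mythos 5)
Your proof is correct, and it takes a genuinely different route from the paper's. The paper splits $\E[\ttheta_k^{2m}]$ over the events $E_n^*$ and $\oE_n^*$: on $\oE_n^*$ it bounds $\ttheta_k^{2m}$ by $\tc_{p_1}^{2m}$ and invokes Lemma~\ref{lemma:E-cp1^2-IoE*} (hence ultimately Lemma~\ref{lemma:P-oEn*-bound}) to kill that term, while on $E_n^*$ it bounds $\ttheta_k^2$ by $\max_{i\in\oK^*}\tc_i^2$ and cites Lemma~\ref{lemma:max-chi2-E}. Your pigeonhole observation --- that among $p_1,\dots,p_{k^*+1}$ at least one index must lie in $\oK^*$, so $\ttheta_k^2\le\max_{i\in\oK^*}\tc_i^2$ holds deterministically (almost surely) --- removes the need for the event decomposition and for Lemma~\ref{lemma:E-cp1^2-IoE*} altogether, which is a cleaner reduction. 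More importantly, you make rigorous a step the paper glosses over: the paper only shows $\E\bigl[\max_{i\in\oK^*}\tc_i^{2m}\bigr]/(2\log n)^m\to 1$, which by itself gives only $\limsup\le 1$ and does not yield the stated inequality $\E[\ttheta_k^{2m}]\le(2\log n)^m$ for large $n$ (the ratio could approach $1$ from above). Your binomial expansion around the centering $t_N$, showing the correction is $O(t_N^{m-1})$ while the slack $(2\log n)^m-t_N^m$ grows like $m\,t_N^{m-1}\log\log n$, is exactly what is needed to close this gap; in that sense your argument is not just an alternative but a more complete proof of the lemma as stated.
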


\begin{proof}{}
We can write
\begin{equation}
\label{eq:ttheta_k^2-upper-bound-1}
\E\left[\ttheta_k^{2m}\right]=\E\left[\ttheta_k^{2m}I_{\oE_n^*}\right]
+\E\left[\ttheta_k^{2m}I_{E_n^*}\right].
\end{equation}
By Lemma \ref{lemma:E-cp1^2-IoE*} and the definition of $\ttheta_k$,
\begin{equation}
\E\left[\ttheta_k^2I_{\oE_n^*}\right]\le\E\left[\tc_{p_1}^{2m}I_{\oE_n^*}\right]\to
 0~(n\to\infty).
\end{equation} 
We define $\tc^2=\max_{i\in\oK^*}\tc_i^2$.
If $E_n^*$ occurs then
 $\ttheta_k^2\le\tc^2$ and $\tc^2$
 is the largest value among i.i.d. $\chi_1^2$ random sequence with
 length $(n-k^*)$.  Therefore, by Lemma \ref{lemma:max-chi2-E},
\begin{equation}
\frac{\E\left[\ttheta_k^{2m}I_{E_n^*}\right]}{(2\log n)^m}
\le\frac{\E\left[\tc^{2m}\right]}{(2\log n)^m}\to 1~(n\to\infty).
\end{equation}
\end{proof}

\section{Proof of Theorems}

We give the proofs of the main theorems below.

\begin{proof}[Proof of Theorem \ref{theorem:Rnk-evalpha}]
For an $\evalpha$, the risk is reformulated as
\begin{align}
\label{eq:risk}
\RAS(n,k)&=\E\|\eAv\ebv_k-\vbeta\|^2\notag\\
&=\E\|\eAv\ebv_k-\ecv\|^2+\E\|\ecv-\h\|^2
+2\E(\eAv\ebv_k-\ecv)'(\ecv-\vbeta)\notag\\
&=\E\|\eAv\ebv-\ecv\|^2+\sigma^2+2\E(\eAv\ebv_k-\ecv)'(\ecv-\vbeta)\notag\\
&=\frac{1}{n}\E\|\evmu_{k,\alpha}-\y\|^2+\sigma^2+2\E(\eAv\ebv_k-\ecv)'(\ecv-\vbeta)\notag\\
&=\frac{1}{n}\E\|\evmu_{k,\alpha}-\y\|^2+\sigma^2+\frac{2\sigma^2}{n}
\E(\eAv\tbv_k-\tcv)'(\tcv-\c),
\end{align}
where we used (\ref{eq:dist-of-ecv}) at the third line and the
orthogonality condition at the last line. The last term is often called
the degree of freedom; see e.g. \citep{LARS}. 

Let $\tcv_{-j}$ be an $n-1$-dimensional vector that is constructed by
removing $\tc_j$ from $\tcv$. 
We define $d_j(\tcv)=(\ealpha_j\tb_{k,j}-\tc_j)$.
Although $d_j$ is a function
of $\tcv$, we regard this as a function $\tc_j$ under a fixed
$\tcv_{-j}$ and denote it by $d_j(\tc_j|\tcv_{-j})$. 
Let $\theta_j$ be the $k$th largest value in $\{|\tc_i|:i=1,\ldots,n,i\neq j\}$.
By (\ref{eq:component-wise-scaling}), we have
\begin{equation}
d_j(\tc_j|\tcv_{-j})=\ealpha_j\tb_{k,j}-\tc_j=
\begin{cases}
-\theta_j^2/\tc_j & |\tc_j|>\theta_j\\
-\tc_j & |\tc_j|\le\theta_j\\		 
\end{cases}.
\end{equation}
Note here that $d_j(\tc_j|\tcv_{-j})$ is well-defined even when $\tc_j=0$
 under the definition of $\ealpha_j$ in (\ref{eq:component-wise-scaling}).
This is Lipschitz continuous as a function of $\tc_j$ when $\tcv_{-j}$
is fixed. It is thus absolutely continuous. 
On the other hand, we denote expectation with respect to $\tcv$ by $\E_{\tcv}$. We have
$\tcv=\G'\y/(\sigma\sqrt{n})$ and
$|\det(\G'\y/(\sigma\sqrt{n}))^{-1})|=\sigma^n$, where $\det$
denotes the determinant of a matrix.
Therefore, $\E$ is always replaced with $\E_{\tcv}$ by change of variables.
We also denote a conditional expectation with respect to $\tc_j$ given
$\tcv_{-j}$ by $\E_{\tc_j|\tcv_{-j}}$.
We define $I_{[-\theta_j,\theta_j]}(\tc_j|\tcv_{-j})$ by 
$I_{[-\theta_j,\theta_j]}(\tc_j|\tcv_{-j})=1$ when
$\tc_j\in[-\theta_j,\theta_j]$ and $0$ otherwise under a fixed
$\tcv_{-j}$. 
Then, by applying this change of variables and Stein's
lemma\citep{CS1981} under the above absolutely continuity, we obtain
\begin{align}
\label{eq:A-tbv-tcv-tcv-c}
&\E(\eAv\tbv_k-\tcv)'(\tcv-\c)\notag\\
&=\sum_{j=1}^n\E\left[d_j(\tcv)(\tc_j-c_j)\right]\notag\\
&=\sum_{j=1}^n\E_{\tcv}\left[d_j(\tcv)(\tc_j-c_j)\right]\notag\\
&=\sum_{j=1}^n\E_{\tcv_{-j}}\E_{\tc_j|\tcv_{-j}}
\left[d_j(\tc_j|\tcv_{-j})(\tc_j-c_j)\right]\notag\\
&=\sum_{j=1}^n\E_{\tcv_{-j}}\E_{\tc_j|\tcv_{-j}}
\left[\frac{\partial d_j(\tc_j|\tcv_{-j})}{\partial
 \tc_j}\right]\notag\\
&=\E_{\tcv}\left[\sum_{j=1}^n
(\theta_j^2/\tc_j^2)(1-I_{[-\theta_j,\theta_j]}(\tc_j|\tcv_{-j}))\right]
-\E_{\tcv}\left[
\sum_{j=1}^nI_{[-\theta_j,\theta_j]}(\tc_j|\tcv_{-j})\right]\notag\\
&=\E\left[\sum_{j\in\eK_k}(\ealpha_j-1)^2\right]-(n-k),
\end{align}
where the last line is obtained by the definition of $\theta_j$ and
 $\ealpha_j$.
(\ref{eq:risk}) and 
(\ref{eq:A-tbv-tcv-tcv-c}) yield (\ref{eq:theorem-Rnk-evalpha}).
\end{proof}

\begin{proof}[Proof of Theorem \ref{theorem:ealpha_j-convergence}]
We show that 
\begin{align}
\label{eq:ealpha_j-bound-in*-1}
\P\left[\ealpha_j>1+\epsilon_{j,n}\right]\to 0~(n\to\infty)
\end{align}
for $j\in K^*$. 
We define $E_0=\{\tc_j=0\}$ for which $\P[E_0]=0$.
By the definition of $\ealpha_j$ in
 (\ref{eq:component-wise-scaling}), we then have
\begin{align}
\label{eq:ealpha_j-bound-in*-2}
&\P\left[\ealpha_j>1+\epsilon_{j,n}\right]\notag\\
&=\P\left[\left\{\ealpha_j>1+\epsilon_{j,n}\right\}\bigcap \oE_0\right]\notag\\
&=\P\left[\left\{\frac{\ttheta_k}{|\tc_j|}>\epsilon_{j,n}\right\}\bigcap \oE_0\right]\notag\\
&\le\P\left[\frac{\ttheta_k}{\sqrt{2\log n}}
-\frac{|\tc_j|}{(|\beta_j|+\delta)\sqrt{n}}>0\right]\notag\\
&\le
\P\left[\ttheta_k>\sqrt{2\log n}\right]
+\P\left[|\tc_j|<(|\beta_j|+\delta)\sqrt{n}\right].
\end{align}
For the first term of (\ref{eq:ealpha_j-bound-in*-2}), we have
\begin{align}
\label{eq:ealpha_j-bound-in*-3}
&\P\left[\ttheta_k>\sqrt{2\log n}\right]\notag\\
&=\P\left[\ttheta_k^2>2\log n\right]\notag\\
&=\P\left[\ttheta_k^2>2\log n|E_n^*\right]\P\left[E_n^*\right]+
\P\left[\ttheta_k^2>2\log n|\oE_n^*\right]\P\left[\oE_n^*\right]\notag\\
&\le\P\left[\ttheta_k^2>2\log n|E_n^*\right]+
\P\left[\oE_n^*\right].
\end{align}
The second term of (\ref{eq:ealpha_j-bound-in*-3}) goes
to zero as $n\to\infty$ by Lemma \ref{lemma:P-oEn*-bound}. If $E_n^*$
occurs then $\ttheta_k^2$ is the $(k+1-k^*)$th largest value among
i.i.d. $\chi_1^2$ random sequence with size $n-k^*$.  Therefore, by
(\ref{eq:prob-bound-mth-largest-chi2-upper}) in Lemma
\ref{lemma:prob-bound-mth-largest-chi2}, the first term of
(\ref{eq:ealpha_j-bound-in*-3}) goes to zero as $n\to\infty$. Thus, the
first term of (\ref{eq:ealpha_j-bound-in*-2}) goes to zero as
$n\to\infty$.  Recall that $\tc_j=\sqrt{n}\beta_j+\oc_j$ for $j\in K^*$,
where $\oc_j\sim N(0,1)$. Then, for the second term of
(\ref{eq:ealpha_j-bound-in*-2}), we obtain
\begin{align}
\label{eq:ealpha_j-bound-in*-4} 
\P\left[|\tc_j|<(|\beta_j|+\delta)\sqrt{n}\right]
&=\P\left[|\sqrt{n}\beta_j+\oc_j|<(|\beta_j|+\delta)\sqrt{n}\right]\notag\\
&\le\P\left[\sqrt{n}|\beta_j|-|\oc_j|<(|\beta_j|+\delta)\sqrt{n}\right]\notag\\
&=\P\left[|\oc_j|>\delta\sqrt{n}\right]\to 0~(n\to\infty).
\end{align}
Since $\ealpha_j\ge 1$ holds, we obtain
(\ref{eq:ealpha_j-lower-bound-K*-0}) as desired.

On the other hand, we consider (\ref{eq:ealpha_j-lower-bound-oK*}). 
For any $\delta_n$, we have
\begin{align}
\label{eq:ealpha_j-bound-notin*-1}
\P\left[\ealpha_j\le 2-\epsilon\right]
&=\P\left[\{\ealpha_j\le 2-\epsilon\}\bigcap\oE_0\right]\notag\\
&\le\P\left[\ttheta_k\le (1-\epsilon)|\tc_j|\right]\notag\\
&\le\P\left[\ttheta_k\le\delta_n\right]
+\P\left[(1-\epsilon)|\tc_j|>\delta_n\right].
\end{align}
For the first term of (\ref{eq:ealpha_j-bound-notin*-1}), we have
\begin{align}
\label{eq:ealpha_j-bound-notin*-2}
\P\left[\ttheta_k\le\delta_n\right]
&=\P\left[\ttheta_k^2\le\delta_n^2\right]\notag\\
&=\P\left[\ttheta_k^2\le\delta_n^2|E_n^*\right]\P[E_n^*]+
\P\left[\ttheta_k\le\delta_n^2|\oE_n^*\right]\P[\oE_n^*]\notag\\
&\le\P\left[\ttheta_k^2\le\delta_n^2|E_n^*\right]+\P[\oE_n^*].
\end{align}
By Lemma \ref{lemma:P-oEn*-bound}, the second term of
(\ref{eq:ealpha_j-bound-notin*-2}) goes to zero as $n\to\infty$.  We set
$\delta_n=\sqrt{2(1-\epsilon)\log n}$. If $E_n^*$ occurs then
$\ttheta_k^2$ is the $(k+1-k^*)$th largest value among i.i.d. $\chi_1^2$
random sequence with size $n-k^*$.  Therefore, by
(\ref{eq:prob-bound-mth-largest-chi2-lower}) in Lemma
\ref{lemma:prob-bound-mth-largest-chi2} and the choice of $\delta_n$, the
first term of (\ref{eq:ealpha_j-bound-notin*-2}) goes to zero as
$n\to\infty$.  
We define $\tc^2=\max_{i\in\oK^*}\tc_i^2$.
For the second term of (\ref{eq:ealpha_j-bound-notin*-1}), we have
\begin{align}
\label{eq:ealpha_j-bound-notin*-3}
\P\left[(1-\epsilon)|\tc_j|>\delta_n\right]\le\P\left[\tc^2>2\log n\right]
\end{align}
since $j\in K^*$. Here, $\tc^2$ is the largest value among
i.i.d. $\chi_1^2$ random sequence with size $n-k^*$ by the definitions
of $\tc_i$ and $K^*$. Hence, by 
(\ref{eq:prob-bound-mth-largest-chi2-upper}) in Lemma
\ref{lemma:prob-bound-mth-largest-chi2}, 
 (\ref{eq:ealpha_j-bound-notin*-3}) goes to zero as $n\to\infty$.
\end{proof}

\begin{proof}[Proof of Theorem \ref{theorem:R(1)-R(evalpha)}]
By (\ref{eq:risk-for-LST}) and (\ref{eq:theorem-Rnk-evalpha}), we have
\begin{align}
\label{eq:theorem-R(1)-R(evalpha)-1}
&R_{n,k^*}(\1_n)-\RAS(n,k^*)\notag\\
&=\frac{\sigma^2}{n}\sum_{j=1}^{k^*}\left(\E\left[(\ealpha_{p_j}-1)^2\tc_{p_j}^2\right]
-\E\left[(\ealpha_{p_j}-1)^2\ttheta_{k^*}^2\right]
-2\E\left[(\ealpha_{p_j}-1)^2\right]\right)\notag\\
&=\frac{\sigma^22\log n}{n}
\sum_{j=1}^{k^*}\left(\frac{\E\left[\ttheta_{k^*}^2\right]}{2\log n}
-\frac{\E\left[(\ealpha_{p_j}-1)^2\ttheta_{k^*}^2\right]}{2\log n}
-2\frac{\E\left[(\ealpha_{p_j}-1)^2\right]}{2\log n}\right)
\end{align}
through a simple calculation.
We evaluate the three terms in the sum of (\ref{eq:theorem-R(1)-R(evalpha)-1}).
We first have 
\begin{equation}
\lim_{n\to\infty}\E[\ttheta_{k^*}^2]/(2\log n)=1
\end{equation}
by Lemma \ref{lemma:E-ttheta_k^2-bound}.
Hence, the proof is completed by showing that the second and third terms of 
(\ref{eq:theorem-R(1)-R(evalpha)-1}) goes to zero as $n\to\infty$.
We define $\epsilon_n=\max_{j\in K^*}\epsilon_{j,n}$ and
$G_j=\{(\ealpha_{p_j}-1)^2>\epsilon_n^2\}$, where $\epsilon_{j,n}$ is
 defined in (\ref{eq:def-epsilon_n}). 
We have $(\ealpha_{p_j}-1)^2\le 1$ for $j\le k^*$ by the definition of
 $\ttheta_{k^*}=|\tc_{p_{k^*+1}}|$. And, if $E_n^*$ occurs then
 $\bigcup_{l\in K^*}\{p_j=l\}$ for any $j\in\{1,\ldots,k^*\}$.
We then obtain
\begin{align}
\label{eq:theorem-R(1)-R(evalpha)-3}
\E\left[(\ealpha_{p_j}-1)^2\right]
&=\E\left[(\ealpha_{p_j}-1)^2I_{G_j\bigcap E_n^*}\right]
+\E\left[(\ealpha_{p_j}-1)^2I_{\oG_j\bigcup\oE_n^*}\right]\notag\\
&\le\E\left[I_{G_j\bigcap\oE_n^*}\right]+\epsilon_n^2+\P[\oE_n^*]\notag\\
&\le\sum_{l\in K^*}\P\left[(\ealpha_l-1)^2>\epsilon_n^2\right]+\epsilon_n^2+\P[\oE_n^*].
\end{align}
(\ref{eq:theorem-R(1)-R(evalpha)-3}) goes to zero as $n\to\infty$ by 
(\ref{eq:ealpha_j-lower-bound-K*-0}) in Theorem
 \ref{theorem:ealpha_j-convergence}, the definition of $\epsilon_n$ and Lemma 
\ref{lemma:P-oEn*-bound}. We also have
\begin{align}
\label{eq:theorem-R(1)-R(evalpha)-4}
&\E\left[(\ealpha_{p_j}-1)^2\ttheta_{k^*}^2\right]\notag\\
&=\E\left[(\ealpha_{p_j}-1)^2\ttheta_{k^*}^2I_{G_j\bigcap E_n^*}\right]
+\E\left[(\ealpha_{p_j}-1)^2\ttheta_{k^*}^2I_{\oG_j\bigcup\oE_n^*}\right]\notag\\
&\le\E\left[\ttheta_{k^*}^2I_{G_j}I_{E_n^*}\right]
+\epsilon_n^2\E[\ttheta_{k^*}^2]+\E[\ttheta_{k^*}^2I_{\oE_n^*}].
\end{align}
For the first term of (\ref{eq:theorem-R(1)-R(evalpha)-4}), by the
Cauchy-Schwarz inequality, we have
\begin{align}
\label{eq:theorem-R(1)-R(evalpha)-5}
\frac{\E\left[\ttheta_{k^*}^2I_{G_j}I_{E_n^*}\right]}{2\log n}
&\le\frac{\sqrt{\E\left[\ttheta_{k^*}^4\right]}}{2\log n}
\sqrt{\E\left[I_{G_j}I_{E_n^*}\right]}\notag\\
&\le\frac{\sqrt{\E\left[\ttheta_{k^*}^4\right]}}{2\log n}
\sqrt{\sum_{l\in K^*}\P\left[(\ealpha_l-1)^2>\epsilon_n^2\right]}.
\end{align}
(\ref{eq:theorem-R(1)-R(evalpha)-5}) goes to zero as $n\to\infty$ by
Lemma \ref{lemma:E-ttheta_k^2-bound} and
(\ref{eq:ealpha_j-lower-bound-K*-0}) in Theorem
\ref{theorem:ealpha_j-convergence}.  The second term of
(\ref{eq:theorem-R(1)-R(evalpha)-4}) goes to zero as $n\to\infty$ by
Lemma \ref{lemma:E-ttheta_k^2-bound} and the definition of $\epsilon_n$.
By the Cauchy-Schwarz
inequality, the third term of (\ref{eq:theorem-R(1)-R(evalpha)-4}) is bounded above
by $\sqrt{\E[\ttheta_{k^*}^4]}\sqrt{\P[\oE_n^*]}$. This goes to zero as $n\to\infty$ by Lemma
\ref{lemma:E-ttheta_k^2-bound} and Lemma \ref{lemma:P-oEn*-bound}.
We thus obtain (\ref{eq:theorem-R(1)-R(evalpha)}) as desired.
\end{proof}

\end{document}